\newtheorem{thm}{\bf Theorem}[section]
\newtheorem{cor}[thm]{\bf Corollary}
\newtheorem{lem}[thm]{\bf Lemma}
\newtheorem{rem}[thm]{\bf Remark}
\newtheorem{ass}[thm]{\bf Assumption}
\newtheorem{prop}[thm]{\bf Proposition}
\newtheorem{deff}[thm]{\bf Definition}
\newtheorem{eg}[thm]{\bf Example}
\newcommand{\eps}{\varepsilon}
\newcommand{\ls}{\mathcal{L}}
\newcommand{\R}{\mathbb{R}}
\newcommand{\N}{\mathbb{N}}
\newcommand{\ppp}{\mathbf{P}}
\newcommand{\eee}{\mathbf{E}}
\newcommand{\tx}{\tilde{x}}
\newcommand{\om}{\Omega}
\newcommand{\xx}{\mathcal{X}}
\newcommand{\uu}{\mathcal{U}}
\newcommand{\uf}{\mathfrak{u}}
\newcommand{\psp}{\mathfrak{P}}
\newcommand{\ck}{\mathfrak{C}}
\newcommand{\pp}{\mathcal{P}}
\newcommand{\ff}{\mathcal{F}}
\newcommand{\trans}{\mathcal{T}}
\newcommand{\set}[1]{\left\{#1\right\}}
\newcommand{\ws}[1]{\left\|#1\right\|_{\operatorname{W}}}
\newcommand{\tv}[1]{\left\|#1\right\|_{\operatorname{TV}}}
\newcommand{\Qs}{\mathcal{Q}}
\newcommand{\ball}{\overline{\mathbb{B}}}
\newcommand{\oball}{\mathbb{B}}
\newcommand{\ballr}{\overline{\mathcal{B}}}
\newcommand{\oballr}{\mathcal{B}}
\newcommand{\LL}{\mathcal{L}}
\newcommand{\LLk}{\mathfrak{L}}
\newcommand{\cj}{\wedge}
\newcommand{\br}{\mathscr{B}}
\newcommand{\ep}{\vartheta}
\newcommand{\inte}{\operatorname{Int}}
\newcommand{\ra}{\rightarrow}
\newcommand{\sub}{\subseteq}
\newcommand{\wh}{\widehat}
\newcommand{\ob}{\mathbf{\Omega}}
\newcommand{\w}{\varpi}
\newcommand{\tta}{\Theta}
\newcommand{\ts}{\mathcal{T}}
\newcommand{\ap}{\operatorname{AP}}
\newcommand{\fscr}{\mathscr{F}}
\newcommand{\Ws}{\mathcal{W}}
\newcommand{\nuo}{\mathcal{\nu}_0}
\newcommand{\wra}{\rightharpoonup}
\newcommand{\G}{\mathfrak{G}}
\newcommand{\bw}{\mathcal{B}_W}
\newcommand{\pim}{\mathcal{P}}
\newcommand{\T}{\mathbb{T}}
\newcommand{\ttil}{\widetilde{\mathcal{T}}}
\newcommand{\transs}[1]{\left[\![#1\right]\!]}
\newcommand{\csig}{\Sigma}
\newcommand{\I}{\mathbb{I}}
\newcommand{\Q}{\mathcal{Q}}
\newcommand{\tmou}{\widetilde{\transs{\mathcal{T}^u}}}
\newcommand{\tnu}{\tilde{\nu}}
 \newcommand{\tdt}{\widetilde{\mathcal{T}}}
 \newcommand{\fl}[1]{\lfloor{#1}\rfloor}
\newcommand{\rff}{{\operatorname{ref}}}
\newcommand{\act}{{\textit{Act}}}
\newcommand{\MU}{\mathbb{XU}}
\newcommand{\IU}{\mathbb{IA}}
\newcommand{\af}{\mathfrak{a}}
\newcommand{\wb}{\mathbf{w}}
\newcommand{\undert}{\check{\Theta}}
\newcommand{\overt}{\hat{\Theta}}
\title{\LARGE \bf Robustly Complete Finite-State
Abstractions for 
Control Synthesis of Stochastic Systems}
\author{Yiming Meng and Jun Liu
\thanks{Preprint submitted to IEEE Open Journal of Control Systems.}% <-this % stops a space
\thanks{Yiming Meng and Jun Liu are  with Department of Applied Mathematics,
        University of Waterloo, Ontario, Canada, 
        {\tt\small \{yiming.meng, j.liu\}@uwaterloo.ca}}%
}
\begin{document}
\maketitle
\thispagestyle{plain}
\pagestyle{plain}

\begin{abstract}

The essential step of abstraction-based control synthesis for nonlinear systems to satisfy a given specification is to obtain a finite-state abstraction of the original systems. The complexity of the abstraction is usually the dominating factor that determines the efficiency of the algorithm. For the control synthesis of discrete-time nonlinear stochastic systems modelled by nonlinear stochastic difference equations, recent literature has demonstrated the soundness of abstractions in preserving robust probabilistic satisfaction of $\omega$-regular linear-time properties. However, unnecessary transitions exist within the abstractions, which are difficult to quantify, and the completeness of abstraction-based  control synthesis in the stochastic setting  remains an open theoretical question. In this paper, we address this fundamental question from the topological view of metrizable space of probability
measures, and propose  constructive finite-state abstractions for control synthesis of probabilistic linear temporal specifications. Such abstractions are both sound and approximately complete. That is, given a concrete discrete-time stochastic system and an arbitrarily small $\mathcal{L}^1$-perturbation of this system, there exists a family of finite-state controlled Markov chains that both abstracts the concrete system and is abstracted by the slightly perturbed system. In other words, given  an arbitrarily small prescribed precision, an abstraction  always exists to decide whether a  control strategy exists for the concrete system to satisfy the probabilistic specification.  
\end{abstract}

\begin{keywords}
Abstraction, completeness, control synthesis, decidability,  $\mathcal{L}^1$-perturbation, linear-time property, metrizable space of probability measures,
nonlinear systems,   robustness, stochastic systems.

\end{keywords}

\maketitle

\section{INTRODUCTION}

%Formal synthesis is to design controllers from a temporal logic specification using formal analysis or model checking \cite{baier2008principles}.
Abstraction-based formal synthesis  relies on  obtaining a finite-state abstraction (or symbolic model) of the original, and possibly nonlinear systems.  Computational methods, such as graph-based model checking and automaton-guided controller synthesis, are then developed based on the abstraction to verify the system or synthesize controllers with respect to a temporal logic specification \cite{baier2008principles,belta2017formal,li2023formal}. Abstractions enable autonomous decision making of physical systems to achieve more complex tasks, and received significant success in the past decade \cite{belta2017formal,liu2016finite, nilsson2017augmented, ozay2013computing,li2023formal}. Regardless of heavy state-space
discretization and complicated abstraction analysis, formal methods compute with guarantees a set of initial states from which a
controller exists to realize the given specification \cite{belta2017formal, liu2017robust, li2020robustly,li2023formal}.  

Heuristically, abstractions use a finite-state automaton to solve the corresponding search
problem  at a cost of potentially including non-deterministic  transition in the automation. 
  For non-stochastic control systems, both sound and approximately complete abstractions exist \cite{kloetzer2008fully,tabuada2006linear, liu2017robust,li2020robustly,liu2021closing,li2023formal}. This is in the sense that, the abstractions can not only include a sufficient number of  transitions  to design
provably correct controllers, but also quantify the level of over-approximation allowed for a specified precision. Therefore, the  completeness analysis theoretically removes the doubts of finding an abstraction-based approach once a robust control strategy of a certain degree is supposed to exist with respect to a given  specification,  which makes any computational
attempts not meaningless.

There is a recent surge of interest in studying formal methods for stochastic systems in verification and control synthesis of probabilistic specifications. Formal analysis of stochastic abstractions relies on different mathematical techniques. We review some crucial results from the literature that are pertinent to the work presented in this paper.

\subsection{Related Work}
Probabilistic model checkers have been developed for discrete-time discrete-state fully observed  Markov decision processes (MDP) and partially observed  Markov decision processes (POMDP) \cite{kwiatkowska2002prism,parker2013verification, chatterjee2016decidable,hensel2021probabilistic, kwiatkowska2022probabilistic}, and have gained success in applications of control synthesis with  probabilistic temporal logics \cite{ding2011mdp, ding2011ltl, lacerda2014optimal, wells2020ltlf}. 

For continuous-state space, a major strategy is to approximate the transition kernels by some reference (control) stochastic matrices,  known as finite-model approximations, to solve optimal control problem or control synthesis with respect to probabilistic temporal logics \cite{saldi2017finite, norman2017verification}. Probabilistic reachability and safety related control synthesis can be resolved by relating the satisfaction probability 
to the corresponding
value functions 
\cite{ramponi2010connections,soudjani2011adaptive,summers2010verification,abate2008probabilistic,abate2011quantitative}. By necessarily imposing stability-like conditions, the problem can be reduced to
solving the 
 characterized dynamic programming problem  using computable bounded-horizon counterparts \cite{tkachev2011infinite,tkachev2012regularization, tkachev2013formula, tkachev2014characterization,tkachev2017quantitative}. 

For fully observed systems, other than the approximation schemes, a formal abstraction  for stochastic systems provides an inclusion of all possible  approximate transitions of the   labelled  processes, which  eventually will  preserve the probability of satisfaction in a proper sense.   Bounded-Parameter Markov Decision Processes (BMDP) can naturally serve this purpose \cite{givan2000bounded, lahijanian2015formal}.  A BMDP   contains  a
family of finite-state MDPs with uncertain transitions given each action, and   provide the upper and lower quasi-stochastic matrices as abstractions
for the continuous-state controlled Markov systems. The authors of \cite{lahijanian2015formal}  developed algorithms based on \cite{givan2000bounded,wu2008reachability}  to obtain the upper/lower bound of the satisfaction probability of fundamental formulas of probabilistic computation tree logic.  The work in \cite{cauchi2019efficiency} formulated BMDP abstraction for  bounded-linear temporal logic  specifications. The most recent works \cite{dutreix2020verification, dutreix2022abstraction} for the first time developed a specification-guided refinement strategy on the partition of the state space and presented a synthesis procedure for finite-mode discrete-time stochastic systems against any $\omega$-regular specifications. All the above mentioned abstraction techniques appear to be sound but
not complete, apart from  \cite{abate2008markov} under strong assumptions.

The recent research \cite{meng2022robustly} proposed a notion of completeness for stochastic abstractions in verification of probabilistic $\omega$-regular properties. That is,  
given a concrete discrete-time continuous-state Markov process $X$, 
and an arbitrarily small $\ls^1$-bounded perturbation of this system, there always
exists an Interval Markov Chains (IMC) abstraction whose interval of satisfaction probability contains that of $X$, and meanwhile is contained  by that of the slightly perturbed
system. Instead of imposing the mix-monotone conditions \cite{dutreix2020specification} and  the strong stability (ergodicity) assumptions  \cite{abate2008markov} of the stochastic systems, the analysis in \cite{meng2022robustly} is based on the topology of metrizable space of probability measures with only mild conditions.   
\iffalse
Markov set-chains create  inclusions of stochastic matrices  bounded by some quasi-stochastic matrices that are generated by over-approximations of the concrete models. 
The authors  demonstrated  that  the  error of using any member within the Markov set-chain as the reference transition matrix is finite in infinite-horizon under strong assumptions on stability (ergodicity). The results imply that the real satisfaction probability of any specifications should be within the neighborhood of the  probability solved by the reference matrix. \fi
This methodology proves to be more effective than simply discussing the value of probabilities and enables us to demonstrate the approximate completeness of abstraction-based stochastic control synthesis. %Additionally, while it is often believed to be true in literature that abstraction-based stochastic control synthesis is sound (e.g., Fact 1 of \cite{dutreix2020specification,dutreix2020verification}, as well as similar statements in \cite{lahijanian2015formal, laurenti2020formal, delimpaltadakis2021abstracting}), the paper provides the first formal proof of its soundness, to the best knowledge of the authors.

\subsection{Contributions}

In this paper, we establish theoretical results on abstraction-based control of discrete-time nonlinear Markov systems, building on recent work \cite{meng2022robustly} on formal verification. In brief:
\begin{itemize}
\item 
We define abstractions based on the topology of the metrizable space of probability measures and propose the concept of robust completeness for controlled Markov systems.

\item While it is often believed to be true in literature that abstraction-based stochastic control synthesis is sound (e.g., Fact 1 of \cite{dutreix2020specification, dutreix2022abstraction}, as well as similar statements in \cite{lahijanian2015formal, laurenti2020formal, delimpaltadakis2021abstracting}), we provide the first formal proof of its soundness, to the best knowledge of the authors.

\item We prove that robustly complete abstractions of fully observed controlled Markov  systems (even with additional uncertainties) exist under a mild assumption, which demonstrates the decidability of robust realization of probabilistic $\omega$-regular temporal logic formulas.

\item We improve upon the analysis in \cite{meng2022robustly} by providing a set of tighter inequalities that avoid unnecessarily refined partitions of the state space to guarantee the prescribed precision. 

\item We discuss the applicability of formal abstractions to partially observed controlled stochastic systems. 

\end{itemize}

The rest of the paper is organized as follows. Section \ref{sec: pre} presents some preliminaries on probability spaces and controlled Markov systems. Section \ref{sec: bmdp} presents the soundness of abstractions in verifying $\omega$-regular linear-time properties for fully observed discrete-time controlled Markov systems.  Section \ref{sec: complete} presents the constructive robust
abstractions with soundness and  approximate completeness guarantees. We discuss the applicability of the proposed method for partially observed discrete-time controlled Markov systems in Section \ref{sec: noisy}.  The paper is concluded in
Section \ref{sec: conclusion}.

\subsection{Conventions for Notation}
We denote by $\small{\prod}$ the product of ordinary sets, spaces, or function values. Denote by $\otimes$ the product of collections of sets, or sigma algebras, or measures. The $n$-times repeated product of any kind is denoted by $(\cdot)^n$ for simplification. Denote by $\pi_j:\prod_{i=0}^\infty(\cdot)_i\rightarrow(\cdot)_j$ the projection to the $j^{\text{th}}$ component.  We denote the Borel $\sigma$-algebra of a set by $\mathscr{B}(\cdot)$ and the space of all probability measures on $\br(\cdot)$ by $\psp(\cdot)$. 

For a set $A\subseteq\R^n$, $\overline{A}$ denotes its closure, $\inte(A)$ denotes its interior, and $\partial A$ denotes its boundary.  
For two sets $A,B\sub\R^n$, the set difference is defined by $A\setminus B=\set{x:\,x\in A,\,x\not\in B}$.

Let $|\cdot|$ denote the inifinity norm in $\R^n$ and let  $ \oball:=\{x\in \R^n: |x|<1\}$. Given a probability space $(\ob, \ff,\ppp)$, we denote by $\|\cdot\|_1:=\eee|\cdot|$ the $\ls^1$-norm for $\R^n$-valued random variables, and let $ \oballr:=\{X: \R^n\text{-valued random variable with}\;\|X\|_1<1\}$. 

Given a matrix $M$, we denote by $M_i$ its $i^{\text{th}}$ row and by $M_{ij}$ its entry at the $i^{\text{th}}$ row and $j^{\text{th}}$ column.

\section{PRELIMINARIES}
\label{sec: pre}

We consider $\mathbb{N}=\{0,1,\cdots\}$ as the discrete time index set, and a general Polish (complete and separable metric) space $\xx$ as the state space. Let $\uu\sub\R^p$ be a compact space of control inputs. We introduce some standard concepts for fully observed controlled Markov processes.

\subsection{Canonical Setup for Discrete-Time Controlled Markov Processes}\label{sec: canon}
The canonical setup for discrete-time controlled processes is provided in \cite{gihman2012controlled}. In brief, without loss of generality, we assume that a stochastic process $X:=\{X_t\}_{t\in\N}$ and a  process of control values $\uf:=\{\uf_t\}_{t\in\N}$ are defined on some (unknown) probability space where the noise is generated. Given any measurable process $\uf$, the probability law of the joint process $(X,\uf):=\{(X_t,\uf_t)\}_{t\in\N}$ can be determined on the canonical space   $((\xx\times\uu)^\infty, \ff, \ppp)$,  where 
$$\ff:=\sigma\{(X_t,\uf_t)\in(\Gamma,\ck),\; (\Gamma, \ck)\in\mathscr{B}(\xx)\otimes\mathscr{B}(\uu),\;\;t\in\mathbb{N}\}.$$
We also denote $X^\uf$ by the controlled process if we emphasize on the state-space marginal of $(X,\uf)$.

We  consider $(X,\uf)$ to be obtained from Markov models, whose transition probabilities, unlike control-free systems, have an extra dependence of the current control input, i.e., 
\begin{equation}
    \tta_t^u(x,\Gamma)=\ppp[X_{t+1}\in\Gamma\;|\;X_t=x,\uf_t=u].
\end{equation}

Now we suppose that
$\uf_t$ is provided according to some  rule at each instant of time $t\in\N$. It is natural
to suppose that the selection of a control at time $t$ is based on the history %$\sigma\{\xx_{0},\xx_1,\cdots,\xx_T\}$ (
$X_{[0,t]}$ and %$\sigma\{\uf_{0},\uf_1,\cdots,\uf_{T-1}\}$ (
$\uf_{[0,t-1]}$, where  
\begin{equation}\label{E: history_notation}
    X_{[0,t]}:=\{X_s\}_{s\in[0,t]}\;\text{and}\;\;\uf_{[0,t]}:=\{\uf_s\}_{s\in[0,t]}.
\end{equation}
For each fixed $t>0$, let $\kappa_t(\cdot\;|\;\cdot)$ be such that, for any $\ck\in\mathscr{B}(\uu)$,
\begin{equation}\label{E: policy}
    \kappa_t(\ck\;|\;X_{[0,t]}; \uf_{[0,t-1]})=\ppp[\uf_t\in\ck\;|\;X_{[0,t]}; \uf_{[0,t-1]}].  
\end{equation}
A control policy is defined as follows. 
\begin{deff}\label{def: policy}
An admissible control policy is the sequence $$\kappa=\{\kappa_t,\;t\in\N\},$$
where, for each $t\in\N$, $\kappa_t$ is given  in the form of \eqref{E: policy}. 
\end{deff}
If $\uf$ is generated based on a control policy $\kappa$, we replace the notation $X^\uf$ by $X^\kappa$. 
\begin{ass}
We assume that $\uf$ is deterministic in \textit{a priori} or generated by deterministic control policies.
\end{ass}

\subsection{Controlled Markov Systems}
We are interested in controlled 
Markov processes with discrete labels of states, which is done by assigning abstract labeling functions over a finite set of atomic propositions. Now we consider an abstract family of labelled controlled Markov  processes as follows.

\begin{deff}[Controlled Markov system]\label{def: MDP}
%Given a filtered probability space $(\Omega,\ff, \ff_t, \pp)$,
A  controlled Markov  system is a tuple $\MU=(\xx, \uu, \{\tta\}, \ap, L)  $, where
\begin{itemize}
\item $\xx=\Ws\cup\Delta$, where $\Ws$ is a bounded working space, $\Delta:=\Ws^c$ represents all the out-of-domain states;
\item $\uu$ is the set of actions;
\item $\{\tta\}:=\{[\![\tta^u]\!]\}_{u\in\uu}$ contains all collections of control-dependent transition probabilities:  for  every $t$, given a realization $u\in\uu$ of the signal $\uf_t$,  the transition $\tta_t^u$ is chosen from the collection $\transs{\tta^u}$ accordingly;
\item $\ap$ is the finite set of atomic propositions;
\item$L:\xx\rightarrow 2^{\ap}$ is the (Borel-measurable) labelling function, i.e. for every $A\in\mathscr{B}(2^{\ap})$, $L^{-1}(A)\in\ff$.

    %\item $x_0\in \W$ is the initial condition;
    %\item[\ding{71}] $\ap$ is the finite set of atomic propositions;
    %\item[\ding{71}] $L:\xx\rightarrow 2^{\ap}$ is the (Borel-measurable) labelling function.%, i.e. for every $A\in\mathscr{2^\Pi}$, $L^{-1}(A)\in\ff$.
\end{itemize}
\end{deff}

Note that for every given $\uf$ and initial condition $X_0=x_0$ (resp. initial distribution $\nu_0\in\psp(\xx)$),  we can generate a process $X^\uf\in\MU^\uf$, whose probability law is denoted by $\ppp^{x_0,\uf}_{X}$ (resp. $\ppp^{\nu_0,\uf}_{X}$), and $\MU^\uf$ denotes all the processes that are generated by $\{\tta\}$ given $\uf$. The collection of all the probability laws of such controlled processes is denoted by $\{\ppp^{x_0,\uf}_{X}\}_{X^\uf\in\MU^\uf}$ (resp. $\{\ppp^{\nu_0,\uf}_{X}\}_{X^\uf\in\MU^\uf}$). We denote by $\{\ppp_{n}^{x_0, \uf}\}_{n=0}^\infty$ (resp. $\{\ppp_{n}^{\nuo, \uf}\}_{n=0}^\infty$) a sequence of $\{\ppp^{x_0,\uf}_{X}\}_{X^\uf\in\MU^\uf}$ (resp. $\{\ppp^{\nu_0,\uf}_{X}\}_{X^\uf\in\MU^\uf}$). We simply use $\ppp_X^\uf$ (resp. $\{\ppp_X^\uf\}_{X^\uf\in\MU^\uf}$) if we do not emphasize the initial condition (resp. distribution).

If $\uf$ is known to be generated according to some deterministic control policy $\kappa$, the previously mentioned notations are changed correspondingly by replacing the superscripts $\uf$ by $\kappa$. If $\kappa$ is not emphasized in the context, we use the superscripts $\uf$ to indicate the general controlled quantities.

\begin{deff}[Clarification of Notation]\label{def: clarification}
In the specific context of discrete state space $\xx$, given a controlled Markov process $X^\uf$ on $\xx^\infty$, we  use the notation $(\om,\fscr,\pp_X^\uf)$ for the discrete canonical spaces of some discrete-state controlled process. We would like to still use the notation $(\ob,\ff,\ppp_X^\uf)$ if the continuity of $\xx$ is not clear or not emphasized.
\end{deff}

For a path of controlled state $\w:=\w_0\w_1\w_2\cdots\in\xx^\infty$, define by 
$L_\w:=L(\w_0)L(\w_1)L(\w_2)\cdots$ its trace.  
The space of infinite words is denoted by $$(2^{\ap})^\omega=\{A_0A_1A_2\cdots:A_i\in 2^{\ap},\;i=0,1,2\cdots\}.$$ 
A linear-time (LT) property is a subset of $(2^{\ap})^\omega$. We are only interested in LT properties $\Psi$ such that $\Psi\in\mathscr{B}((2^{\ap})^\omega)$, i.e., those are Borel-measurable\footnote{By \cite{tkachev2017quantitative} and \cite[Proposition 2.3]{vardi1985automatic}, any $\omega$-regular language of labelled (controlled) Markov processes is measurable. The proof relies on the properties of the canonical space as well as the connection with B\"{u}chi automation.}. 

 To connect with $\omega$-regular specifications, we introduce the semantics of path satisfaction as well as probabilistic satisfaction as follows.
 
 \begin{deff}
Suppose $\Psi$ is a formula of our interest. For a given labelled controlled Markov process $X^\uf$ from $\MU^\uf$  with initial distribution $\nuo$, we formulate the canonical space  $(\ob,\ff,\ppp_X^{\nuo, \uf})$. For a controlled path $\w\in\xx^\infty$, we define the path satisfaction as
$$\w\vDash \Psi\Longleftrightarrow L_\w\vDash \Psi. $$
We denote by $\{ X^\uf\vDash\Psi\}:=\{\w: \;\w\vDash\Psi\}\in\ff$ %(equivalently $\{\w: \;X(\w)\vDash\Psi\}$)
 the events of path satisfaction. Given a specified probability $\rho\in[0,1]$, we define the probabilistic satisfaction of $\Psi$ as
$$X^\uf\vDash \ppp^{\nuo, \uf}_{\bowtie\rho}[\Psi]\Longleftrightarrow\ppp_X^{\nuo, \uf}\{X^\uf\vDash\Psi\}\bowtie \rho, $$
where $\bowtie\in\{\leq, <,\geq,>\}$.
\end{deff}

\subsection{The Concrete Controlled Markov Systems}
We focus on controlled Markov processes determined by the following fully-observed Markov system
\begin{equation}\label{E: sys_abstraction_control}
    X_{t+1}=f(X_t, \uf_t)+b(X_t)\wb_t+ \ep\xi_t,%, \;\;X_0=x_0,
\end{equation}
the sample state  $X_t^\uf(\w)\in\xx\subseteq\R^n$ for all $t\in\mathbb{N}$ given a signal process $\uf$, the stochastic inputs $\{\wb_t\}_{t\in\mathbb{N}}$ are i.i.d.  Gaussian random variables with covariance $ I_{k\times k}$ without loss of generality. Mappings $f:\R^n\times\R^p\ra\R^n $ is locally Lipschitz continuous in both arguments, and $b:\R^n\rightarrow\R^{n\times k}$ is locally Lipschitz continuous.  The memoryless perturbation $\xi_t\in\ballr$ are independent random variables with intensity $\ep\geq 0$ %(not $\eps$) 
and unknown distributions.  We can translate \eqref{E: sys_abstraction_control} into the form of a  controlled Markov system
\begin{equation}\label{E: sysr_control}
    \MU=(\xx,\uu,  \{\mathcal{T}\}, %x_0, 
    \ap,L_\MU),
\end{equation}
where $\{\ts\}:=\{\transs{\ts^u}\}_{u\in\uu}$ is defined in the same way as the $\{\tta\}$ in Definition \ref{def: MDP}. We use notation $\ts$ instead of $\tta$ to indicate the continuity of the transition probability in $x\in\xx$.
\begin{rem}
For $\ep\neq 0$, due to the $\ls^1$-bounded uncertainties, 
\eqref{E: sys_abstraction_control} defines a family $\MU$ of controlled Markov processes.  As to simulate the probability laws at the observation times,  the above system can be regarded as a discrete-time numerical scheme of controlled stochastic differential equations (SDEs) driven by Brownian motions, which demonstrates practical meanings in physical sciences and finance. We will show in the Section \ref{sec: complete} that any uniformly integrable noise with known distribution can play the role of $\{\wb_t\}_{t\in\mathbb{N}}$ in the completeness analysis. The real noise with bounded supports that are considered  in \cite{dutreix2020specification} is a special type. Gaussian variables in \eqref{E: sys_abstraction_control}  do not lose any generalities in view of $\ls^1$ properties and are in favor of our formal analysis. 

In addition, compared to $f$ being mixed-monotone and $b$  being constant in \cite{dutreix2020specification, dutreix2022abstraction}, the choice of $f$ and $b$ in this paper fits more general dynamics in applications.
\end{rem}

For real-world applications, we only care about the behaviors in the bounded working space $\Ws$. It is desired to trap the sample paths at the out-of-domain states once $\Delta$ they reach $\Delta$. By defining stopping time $\tau=\tau(\uf):=\inf\{t\in\mathbb{N}: X^\uf\notin\Ws\}$ for each $X^\uf$, it is equivalent to study the probability law of the corresponding stopped process $\{X^\uf_{t\cj\tau}\}_{t\in\N}$ for any initial condition (or distribution), which coincides with $\ppp_X^{\uf}$ on $\Ws$. In view of the corresponding transitions probability, for each realization of control input $u$ and for all $x\in\xx\setminus\Ws$, the transition probability should satisfy $\mathcal{T}^u(x,\Gamma)=0$ for all $\Gamma$ such that $\Gamma\cap \Ws\neq \emptyset$. 

\begin{rem}
It is worth noting that, in the numerical examples in \cite{dutreix2020verification, dutreix2022abstraction}, the  peudo-Gaussian noise with a bounded support is obtained by  normalizing real Gaussian distribution on $\Ws$ by the probability $\mathcal{N}(0, 1)(\Ws)$, which significantly distorts the shape of Gaussian density within $\inte(\Ws)$.  The treatment of out-of-domain transitions in this paper should preserve the density of $\wb_t$ and hence that of $X_t$ for each $t$ on $\inte(\Ws)$.  The densities can be recovered to the true densities given the stopping time $\tau$ not being triggered. 
\end{rem}

\begin{deff}[Clarification of Notations]\label{def: clarification2}
To avoid any complexity, we use the same notation $X^{\uf}$ and $\ppp_X^{\uf}$ to denote the stopped processes and the associated laws.
\end{deff}

\begin{ass}\label{ass: as1_abstraction}
We assume that $\textbf{in}\in L(x)$ for any $x\notin\Delta$ and $\textbf{in}\notin L(\Delta)$.
We can also include `always $(\textbf{in})$' in the specifications to observe sample paths for `inside-domain' behaviors, which is equivalent to verifying $\{\tau=\infty\}$.
\end{ass}

\subsection{Weak Topology}
Since our purpose is to investigate the relation  between continuous-state and finite-state controlled Markov systems and then demonstrate probabilistic regularities, it is natural to work on the dual space of the state space, i.e., we consider the set of possibly uncertain measures within the topological space of probability measures.  

Consider any separable and complete state space (Polish space) $\xx$.
 The following concepts on the space of probability measures $\psp(\xx)$\footnote{$\psp(\xx)$ is always metrisable given $\xx$ is a Polish space. \cite{rogers2000diffusions}} are frequently used later.  Note that, `if a space is metrisable, the topology is determined by convergences of sequences, which explains
we sometimes only define the concept of convergence,
without explicitly mention the topology.' \cite{hairer2020notes}  

\begin{deff}[Weak convergence\footnote{Technically, this should be weak* convergence as probability measures are  linear bounded functionals  of bounded continuous functions. We use `weak convergence' due to no conflict of concept in this paper. }]\label{def: weak_conv}
A sequence $\{\mu_n\}_{n=0}^\infty\subseteq\psp(\xx)$ is said to converge weakly to a probability measure $\mu$, denoted by $\mu_n\wra\mu$,  if 
\begin{equation}
    \int_\xx h(x)\mu_n(dx)\rightarrow \int_\xx h(x)\mu(dx),\;\;\forall h\in C_b(\xx).
\end{equation}
We frequently use the following 
alternative condition \cite[Proposition 2.2]{da2014stochastic}:
\begin{equation}
    \mu_n(A)\rightarrow\mu(A),\;\;\forall A\in\mathscr{B}(\xx) \;\text{s.t.} \;\mu(\partial A)=0.
\end{equation}

Correspondingly, the weak equivalence of any two measures $\mu$ and $\nu$ on $\xx$ is  such that 
\begin{equation}
    \int_\xx h(x)\mu(dx)=\int_\xx h(x)\nu(dx),\;\;\forall h\in C_b(\xx).
\end{equation}
\end{deff}

\begin{eg}\label{eg: strong_conv}
It is interesting to note that $x_n\rightarrow x$ %(resp. $x=y$) 
in $\xx$ does not imply the strong convergence %(resp. equivalence) 
of the associated Dirac measures. However, we do have $\delta_{1/n}\wra\delta_0$.  A classical counterexample is to let $x_n=1/n$ and $x=0$, and we do not have $\lim_{n\rightarrow\infty}\delta_{1/n}= \delta_0$ in the strong sense since, i.e., 
$0=\lim_{n\rightarrow\infty}\delta_{1/n}(\{0\})\neq \delta_0(\{0\})=1. $ We kindly refer readers to \cite{rogers2000diffusions, sagar2021compactness} and \cite[Remark 3]{meng2022complete} for more details on the weak topology.
\end{eg}

\begin{deff}[Tightness of set of measures]\label{def: tight}
Let $\xx$ be any topological state space  and $M\subseteq\psp(\xx)$ be a set of probability measures on $\xx$. 
We say that $M$ is  tight if, for every $\eps>0$ there exists
a compact set $K\subseteq \xx$ such that $\mu(K)\geq 1-\eps$ for every $\mu\in M$.
\end{deff}
The following theorem provides an alternative criterion for verifying the compactness of family of measures w.r.t. the corresponding metric space using tightness. Note that, on a compact metric space $\xx$, every family of probability measures is tight.
\begin{thm}[Prokhorov]\label{thm: proh}
Let $\xx$ be a complete separable metric space. A family $\Lambda\subseteq\psp(\xx)$ %of probability measures on $\xx$
is relatively compact %\footnote{The compactness is w.r.t. Prokhrov metric, which is uneasy to compute. We provide stronger or equivalent metrics later in this paper.} 
if and only if it is tight. Consequently, for each sequence $\{\mu_n\}$ of tight $\Lambda$, there exists a $\mu\in\bar{\Lambda}$ and a subsequence $\{\mu_{n_k}\}$ such that $\mu_{n_k}\wra\mu$.

%A family $\Lambda$ of probability measures on $\xx$ is relatively compact if and only if for any $\eps>0$, there exists a  compact set $K_\eps\subset \xx$ such that $\mu(K_\eps)\geq 1-\eps$ for all $\mu\in\Lambda$. Consequently, for each $\{\mu_n\}\subset\Lambda$, there exists a $\mu$ on $\xx$ and a subsequence $\{\mu_{n_k}\}$ such that $\mu_{n_k}\Rightarrow\mu$. 
\end{thm}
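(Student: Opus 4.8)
The plan is to prove the two implications separately, treating \emph{tightness $\Rightarrow$ relative compactness} as the substantial direction and \emph{relative compactness $\Rightarrow$ tightness} as the place where completeness is genuinely used; the final ``Consequently'' clause is then merely the sequential restatement of relative compactness in the metrizable space $\psp(\xx)$.

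For the forward (hard) direction I would first dispose of the case where $\xx$ is itself compact, then bootstrap to the general Polish case by a compactification. When $\xx$ is compact metric, $C(\xx)$ is separable; fixing a countable dense family $\{h_j\}\sub C(\xx)$ and a sequence $\{\mu_n\}\sub\Lambda$, the scalars $\int_\xx h_j\,d\mu_n$ are bounded, so a diagonal extraction yields a subsequence along which $\int_\xx h_j\,d\mu_{n_k}$ converges for every $j$, hence (by density and uniform bounds) for every $h\in C(\xx)$. The resulting positive, normalized linear functional is represented via Riesz representation by a probability measure $\mu$, giving $\mu_{n_k}\wra\mu$. For general Polish $\xx$ I would fix a homeomorphic embedding $\iota:\xx\hookrightarrow[0,1]^{\N}$ into the compact, metrizable Hilbert cube and push the measures forward to $\nu_n:=\mu_n\circ\iota^{-1}$. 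The compact case then supplies a subsequence with $\nu_{n_k}\wra\nu$ on $[0,1]^{\N}$, and this is exactly where tightness enters: for the compact sets $K_m\sub\xx$ furnished by Definition \ref{def: tight} with $\mu_n(K_m)\geq 1-1/m$, the images $\iota(K_m)$ are closed in the cube, so the Portmanteau inequality for closed sets gives $\nu\big(\iota(K_m)\big)\geq\limsup_k\nu_{n_k}\big(\iota(K_m)\big)\geq 1-1/m$. Letting $m\to\infty$ shows $\nu$ concentrates on the $\sigma$-compact set $\bigcup_m\iota(K_m)\sub\iota(\xx)$, so it pulls back along $\iota$ to a probability measure $\mu$ on $\xx$; transferring the test-function convergence back through $\iota$ yields $\mu_{n_k}\wra\mu$ with $\mu\in\bar{\Lambda}$.

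For the reverse direction I would argue by contradiction, using separability to produce countable covers and completeness to upgrade total boundedness to compactness. Cover $\xx$, for each $k\in\N$, by balls $\{B(x_i,1/k)\}_i$ of radius $1/k$. The key lemma is that relative compactness forces, for every $\eps>0$ and every $k$, a finite index $N$ with $\mu\big(\bigcup_{i\le N}B(x_i,1/k)\big)>1-\eps$ \emph{uniformly} over $\mu\in\Lambda$: otherwise one extracts measures violating this, passes to a weakly convergent subsequence $\mu_{N_j}\wra\mu$, and applies the Portmanteau inequality for the open sets $\bigcup_{i\le M}B(x_i,1/k)$ to obtain $\mu\big(\bigcup_{i\le M}B(x_i,1/k)\big)\le 1-\eps$ for all $M$, contradicting $\mu(\xx)=1$ as $M\to\infty$. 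Applying this lemma with precision $\eps/2^{k}$ at scale $1/k$ and intersecting the resulting finite unions produces a totally bounded set $A$ with $\mu(A)\ge 1-\eps$ for all $\mu\in\Lambda$; its closure $\bar{A}$ is complete and totally bounded, hence compact, which is the required witness for tightness.

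The main obstacle is the no-escape-of-mass issue in the hard direction: the abstract compactness of $\psp([0,1]^{\N})$ produces a limit measure, but a priori its mass could sit on $\iota(\xx)^c$, i.e.\ on the ``points at infinity'' added by the compactification, and the entire force of tightness is spent ruling this out through the closed-set Portmanteau bound on the $\iota(K_m)$. A secondary subtlety worth flagging is that the converse implication is the only place the completeness hypothesis is indispensable: without it the totally bounded set $A$ need not have compact closure, and tightness can genuinely fail.
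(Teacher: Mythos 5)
The paper does not prove this statement: it quotes Prokhorov's theorem as a classical background result in the preliminaries (with pointers to standard references such as Billingsley), so there is no in-paper proof to compare against. Your argument is the standard textbook proof and is essentially correct in both directions: the Hilbert-cube embedding plus Riesz representation and diagonal extraction for the compact case, the closed-set Portmanteau bound on $\iota(K_m)$ to prevent escape of mass, and the contradiction argument with countable ball covers (using the open-set Portmanteau inequality) followed by the totally-bounded intersection $A=\bigcap_k\bigcup_{i\le N_k}B(x_i,1/k)$ whose closure is compact by completeness. The one step you elide is the transfer of weak convergence back through the embedding: for closed $F\sub\xx$ you need $\limsup_k\mu_{n_k}(F)\le\nu\bigl(\overline{\iota(F)}\bigr)=\nu\bigl(\overline{\iota(F)}\cap\iota(\xx)\bigr)=\mu(F)$, which uses both that $\nu$ is concentrated on $\iota(\xx)$ and that $\iota$ is a homeomorphism onto its image so that $\overline{\iota(F)}\cap\iota(\xx)=\iota(F)$; this is routine but worth writing out. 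The final ``Consequently'' clause is, as you say, just sequential compactness in the metrizable space $\psp(\xx)$ applied to $\bar{\Lambda}$.
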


\subsection{Robust Abstractions}
We define a notion of abstraction between continuous-state and finite-state controlled Markov systems via state-level relations and measure-level relations. 
\begin{deff}
A (binary) relation $\gamma$ from $A$ to $B$ is a subset of $A\times B$ satisfying (i) for each $a\in A$, $\gamma(a):=\{b\in B: (a,b)\in \gamma\}$; (ii) for each $b\in B$, $\gamma^{-1}(b):=\{a\in A: (a,b)\in \gamma\}$; (iii) for $A'\subseteq A$, $\gamma(A')=\cup_{a\in A'}\gamma(a)$; (iv) and for $B'\subseteq B$, $\gamma^{-1}(B')=\cup_{b\in B'}\gamma^{-1}(b)$.
\end{deff}

\begin{deff}\label{def: abs_control}
Given a continuous-state controlled Markov system
$$\MU=(\xx, \uu,  \{\mathcal{T}\},\ap,L_\MU) $$
with a compact $\uu\in\R^p$, 
and a finite-state Markov system
$$\IU=(\Q, \act, \{\tta\}, \ap, L_\IU),$$
where $\Q=(q_1,\cdots,q_N)^T$, $\act=\{a_1,\cdots,a_M\}$, and $\{\tta\}:=\{[\![\tta^a]\!]\}_{a\in\act}$ contains all collections of $n\times n$ stochastic matrices that are also dependent on $a$. 

We say that $\IU$ abstracts $\MU$, and write $\MU\preceq _{\csig_\alpha} \IU$, if there exist
\begin{itemize}
    \item[(1)] a  state-level relation $\alpha\subseteq \xx\times \Q$ from $\MU$ to $\IU$ such that,  for all $x\in \xx$, there exists $q\in \Q$ such that $(x,q)\in\alpha$ ($\alpha(x)\neq \emptyset$) and $L_\IU(q)=L_\MU(x)$;
    \item[(2)] a measure-level relation
$\csig_\alpha\subseteq \psp(\xx)\times\psp(Q)$ from $\MU$ to $\IU$ such that, for all $i\in \{1,2,\cdots, N\}$ and $a\in\act$, there exists  $u\in\uu$ such that for any $\mathcal{T}^u\in\transs{\mathcal{T}^u}$ and all $x\in\alpha^{-1}(q_i)$, there exists $\tta^a\in \transs{\tta^a}$ satisfying $(\mathcal{T}^u(x,\cdot),\tta_i^a)\in\csig_\alpha$ and $ \mathcal{T}^u(x,\alpha^{-1}(q_j)) = \tta_{ij}^a$  for all  $j\in \{1,2,\cdots, n\}$. 
\end{itemize}
The converse abstraction is defined in a similar way.
\end{deff}

\begin{rem}
Heuristically,  we stand from the side of the original system and require an abstraction to  
\begin{itemize}
    \item contain states with the same labels as states of the original system;
    \item include transitional measures with the same measuring results on all the discrete states given  any starting point of the original system that can be mapped to an abstract state.
\end{itemize}
Given a rectangular partition and the existence of an abstraction,  one immediate consequence is that the transition matrices are able to recover all possible transition probabilities (of the original system) from a grid to another. 
\end{rem}

\begin{ass}\label{ass: part}
Without loss of generality, we assume that the labelling function is amenable to a rectangular partition\footnote{See e.g.  \cite[Definition 1]{dutreix2020specification}.}. In other words, a state-level abstraction can be obtained from a rectangular partition.
\end{ass}

\section{SOUNDNESS OF ROBUST BMDP ABSTRACTIONS }\label{sec: bmdp}

BMDPs are quasi-controlled Markov systems on a discrete state space with upper/under approximations ($\overt^u$/$\undert^u$) of the real transition matrices. 
\begin{deff}
A BMDP is a tuple $\mathcal{IA}=(\Qs,\act,\{\undert\}, \{\overt\},\ap,L_\mathcal{IA})$, where
\begin{itemize}
    \item $\Qs$ is an $(N+1)$-dimensional state-space for any  $N$, which is obtained by a  finite state-space partition  containing $\{\Delta\}$, i.e., $\Q=(q_1,q_2,\cdots,q_N, q_{N+1}:=\Delta)^T$;
    \item $\act$ is a finite-dimensional actions;
    \item $\ap$ and $L_\mathcal{IA}$ are the same as in Definition \ref{def: MDP};
    \item $\{\undert\}:=\{\undert^a\}_{a\in\act}$ is a family of $N\times N$ matrix such that $\undert_{ij}^a$ is the lower bound of transition probability from the state number $i$ to $j$ for each $i,j\in\{1,2,\cdots,N\}$ and action $a\in\act$ ;
    \item $\{\overt\}:=\{\overt^a\}_{a\in\act}$ is a family of $N\times N$ matrix such that $\overt_{ij}^a$ is the upper bound of transition probability from the state number $i$ to $j$ for each $i,j\in\{1,2,\cdots,N\}$ and action $a\in\act$.
\end{itemize}
\end{deff}

By adding constraints
\begin{equation}
\begin{split}
   [\![\tta^a]\!] & =\{\tta^a: \text{stochastic matrices with}\; \undert^a\leq \tta^a\leq \overt^a\\
   & \qquad\qquad\text{componentwisely}\},
\end{split}
\end{equation}
we are able to transfer an $\mathcal{IA}$ into a controlled Markov system $\IU$ as in Definition \ref{def: abs_control}, whose $\transs{\tta^a}$'s are well defined sets of stochastic matrices for each $a\in\act$. We call the induced $\IU$, which is verified to satisfy Definition \ref{def: abs_control}, the abstraction generated by the BMDP $\mathcal{IA}$, or simply the BMDP abstraction.

\begin{rem}\label{rem: bmdp}
To make $\IU$ an abstraction for \eqref{E: sysr_control}, we can discretize both $\xx$ and $\uu$, such that each node $a\in\act$ represents a grid of $u\in\uu$. We then need the approximation to be such that $\undert_{ij}^a\leq \int_{\alpha^{-1}(q_j)}\mathcal{T}^u(x,dy)\leq \overt_{ij}^a$ for a $u\in a$, for all $x\in \alpha^{-1}(q_i)$  and $i,j=1,\cdots,N$, as well as $\tta_{N+1}=(0,0,\cdots,1)$.
\end{rem}

For any realization of a sequence of actions $a:= \{a_i\}_{i\in\N}$, the controlled Markov system $\IU^a$ is reduced to a family of perturbed  Markov chains  generated by the uncertain choice of $\{\tta\}$ for each $t$. The $n$-step transition  are derived based on $[\![\tta^{a_i}]\!]$:
\begin{equation*}
    \begin{split}
      [\![\tta^{(2)}]\!]&:=\{\tta_0^{a_0}\tta_1^{a_1}:\;\;\tta_i^{a_i} \in[\![\tta^{a_i}]\!], i = 0, 1\},\\
      & \vdots\\
      %& \cdots\\
      [\![\tta^{(n)}]\!]&:=\{\tta_0^{a_0}\tta_1^{a_1}\cdots\tta_n^{a_n}:\;\;\tta_i^{a_i}\in[\![\tta^{a_i}]\!],\\
      &\qquad\;i=0,1,\cdots,n\}.\\
    \end{split}
\end{equation*}
The weak compactness and convexity of the probability laws of $\IU^a$ are proved in \cite[Section 3.2]{meng2022robustly}. We also kindly refer readers to the arXiv version \cite[Section 3.1]{meng2022complete} for more details on the weak topology properties. 

Taking the advantages of the above properties, we now show the soundness of BMDP abstractions. 

\begin{deff}
Given a state-level abstraction $\alpha$ and a measure-level abstraction $\csig_\alpha$ from $\MU$ to $\IU$. Let $\phi$ and $\kappa$ be some control policies of $\MU$ and $\IU$, respectively. Recall notations in 
\eqref{E: history_notation}. We call $\phi$ a $\csig_\alpha$-implementation of $\kappa$ if, for each $t\in\N$, 
$$\uf_t=\phi_t(X_{[0,t]},\uf_{[0,t-1]}), \;\;X\in\MU^{\phi}$$
is chosen according to 
$$\af_t=\kappa_{t}(I_{[0,t]},\af_{[0,t-1]}), \;\;I\in\IU^\kappa$$
in a way that, for any realization $u$ and $a$ of $\uf_t$ and $\af_t$,  for any $\mathcal{T}^u\in\transs{\mathcal{T}^u}$ and all $x\in\alpha^{-1}(q_i)$, there exists $\tta^a\in \transs{\tta^a}$ satisfying $(\mathcal{T}^u(x,\cdot),\tta_i^a)\in\csig_\alpha$ and $ \mathcal{T}^u(x,\alpha^{-1}(q_j)) = \tta_{ij}^a$  for all  $j\in \{1,2,\cdots, n\}$. 
\end{deff}
We can define the converse implementation from $\IU$ to $\MU$ based on a converse measure-level relation (from $\IU$ to $\MU$) in a similar way. 
\begin{rem}
Heuristically, a control policy $\kappa$ is generated in the finite-state finite-action abstraction model within $\IU$ to ensure a probabilistic satisfaction of some specification. The selection of the control policy $\phi$ is subjected to $\kappa$ and hence $\IU^\kappa$ according to the abstraction relation, such that (2) of Definition \ref{def: abs_control} is guaranteed. 
\end{rem}

\begin{prop}
Let $\IU$ be a controlled Markov system that is derived from a BMDP with any initial distribution $\mu_0$. Then for any $\omega$-regular specification $\Psi$, given any admissible deterministic control policy $\kappa$, the set $$S^{\mu_0,\kappa}=\{\pp^{\mu_0,\kappa}_I(I^\kappa\vDash\Psi)\}_{I^\kappa\in\IU^\kappa}$$ is a compact interval. 
\end{prop}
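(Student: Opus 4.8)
The plan is to view $S^{\mu_0,\kappa}$ as the image of the set of path laws $\Lambda:=\{\pp^{\mu_0,\kappa}_I\}_{I^\kappa\in\IU^\kappa}\subseteq\psp(\Q^\infty)$ under the evaluation functional $\Phi(\pp):=\pp\{I^\kappa\vDash\Psi\}$, and to extract the two defining features of a compact interval---connectedness and closedness---from two separate properties of $\Lambda$. The two facts I would import are: (i) $\{I^\kappa\vDash\Psi\}\in\fscr$ is a bona fide Borel event, by the measurability of $\omega$-regular languages noted in the footnote to the definition of LT properties, so that $\Phi$ is well defined on all of $\psp(\Q^\infty)$; and (ii) $\Lambda$ is weakly compact and convex, as established in \cite[Section 3.2]{meng2022robustly} (see also \cite[Section 3.1]{meng2022complete}) via Prokhorov's theorem (Theorem \ref{thm: proh}) and the convexity of each transition polytope $\transs{\tta^{a}}$; the argument there is stated for fixed action sequences and carries over verbatim to a fixed deterministic policy $\kappa$.

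The connectedness half is immediate. Since $\Phi$ is the integral of the indicator $\mathds{1}_{\{\cdot\vDash\Psi\}}$ against $\pp$, it is affine: $\Phi(\lambda\pp_1+(1-\lambda)\pp_2)=\lambda\Phi(\pp_1)+(1-\lambda)\Phi(\pp_2)$ for $\lambda\in[0,1]$. An affine map sends the convex set $\Lambda$ to a convex subset of $[0,1]$, so $S^{\mu_0,\kappa}=\Phi(\Lambda)$ is an interval. It remains only to decide whether its endpoints $\inf_\Lambda\Phi$ and $\sup_\Lambda\Phi$ are attained.

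Proving that the endpoints are attained is where I expect the real work to lie, and I would \emph{not} try to get it from ``continuous image of a compact set is compact''. The functional $\Phi$ is \emph{not} weakly continuous on $\Lambda$: because $\{\cdot\vDash\Psi\}$ is Borel but in general neither open nor closed, weak convergence $\pp_n\wra\pp$ only forces $\pp_n(A)\to\pp(A)$ on continuity sets $A$ (Definition \ref{def: weak_conv}), and $\{\cdot\vDash\Psi\}$ need not be one---already a safety/tail event such as ``always $\mathbf{in}$'' furnishes sequences in $\Lambda$ along which $\Phi$ jumps. Instead I would reduce the $\omega$-regular objective to reachability: fix a deterministic Rabin (equivalently parity) automaton $\mathcal A_\Psi$ for $\Psi$, form the finite product chain $\widetilde I=I\otimes\mathcal A_\Psi$, and use the classical identity $\Phi(\pp)=\sum_{C}\pp\{\text{reach }C\}$ summed over the accepting bottom strongly connected components $C$ of $\widetilde I$. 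As the adversary ranges over $\IU^\kappa$ the product transition data ranges over a compact polytope but realizes only \emph{finitely many} qualitative support (hence reachability/BSCC) patterns; on the compact polytope-slice associated with each fixed pattern the corresponding reaching probability is a continuous rational function and therefore attains its extrema, and taking the maximum and minimum over the finitely many patterns shows $\sup_\Lambda\Phi$ and $\inf_\Lambda\Phi$ are attained. Together with the convexity of the previous paragraph, this makes $S^{\mu_0,\kappa}$ a closed, bounded interval, hence compact.
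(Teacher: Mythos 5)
Your first half coincides with the paper's own route: the paper also treats $S^{\mu_0,\kappa}$ as the image of the set of path laws under the evaluation of the Borel event $\{I^\kappa\vDash\Psi\}$, derives convexity of every finite-dimensional distribution from the convexity of the polytopes $\transs{\tta^{a}}$, and upgrades this to all of $\fscr$ by a monotone class argument --- which is the same content as your affineness observation. Your warning that $\Phi(\pp)=\pp\{I^\kappa\vDash\Psi\}$ is not weakly continuous is also well taken; it flags a genuine subtlety that the paper's sketch passes over quickly.

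The gap is in your closedness argument. By construction of $\IU^\kappa$, the adversary is non-stationary: a fresh matrix is drawn from $\transs{\tta^{a_t}}$ at every step (the $n$-step transitions are products $\tta_0^{a_0}\cdots\tta_n^{a_n}$ with independently chosen factors), so the paths of $I^\kappa$ are those of a time-inhomogeneous chain. The identity $\Phi(\pp)=\sum_{C}\pp\{\text{reach }C\}$ over accepting BSCCs of a product chain is a statement about one fixed time-homogeneous matrix; for a non-stationary adversary there is no single BSCC decomposition governing tail behavior, and restricting to stationary adversaries changes the object under study (a two-state chain with an absorbing target reached with probability $p\in[0,1/2]$ realizes only $\{0,1\}$ over stationary adversaries but all of $[0,1]$ over non-stationary ones). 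That extremal adversaries for $\omega$-regular objectives may be taken stationary on the product is a nontrivial theorem you would have to prove, not assert. Moreover, even within the stationary world, the ``polytope-slice associated with each fixed support pattern'' is cut out by strict positivity constraints and is therefore not compact, so ``continuous rational function on a compact slice attains its extrema'' fails as stated. The paper instead gets compactness of $S^{\mu_0,\kappa}$ from the weak compactness of the set of laws $\{\pp_I^{\mu_0,\kappa}\}_{I^\kappa\in\IU^\kappa}$ (tightness is free on the compact space $\Q^\infty$, and closedness of the set of laws is inherited from the closedness of the transition polytopes at the level of finite-dimensional marginals, as in the cited verification paper), bootstrapped from cylinder sets --- on which $\Phi$ \emph{is} continuous, cylinders being clopen in $\Q^\infty$ --- to general Borel events via the same monotone class argument. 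To repair your proof you would either need to follow that route or supply the missing stationary-optimality and attainment arguments.
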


\begin{proof}
The proof is similar to \cite[Theorem 2]{meng2022robustly}. We only show the sketch. Let $\af$ be the control input process generated by $\kappa$ such that $\af_t=\kappa_t(I_{[0,t]},\af_{[0,t-1]})$ for each $t$. Note that $\af\in\mathscr{B}(\act^\infty)$ and $\af_t\in \mathscr{B}(\act)$, where the set of actions $\act$ admits a discrete topology. The weak compactness of the probability law $\{\pp_I^{\mu_0,\kappa}\}_{I^\kappa\in\IU^\kappa}$ follows exactly the same reasoning as in  \cite[Proposition 1]{meng2022robustly}. The convexity of every finite-dimensional distribution of $I^\af$ can be obtained in similar way as in \cite[Theorem 2]{meng2022robustly} based on the transition procedure, i.e., for any $q_0,q_{n_1},\cdots,q_{n_t}\in\Q$, 
\begin{equation*}
    \begin{split}
        &\pim_I^{q_0,\kappa}\left[I_0=q_0, \cdots, I_t=q_{n_t}, I_{t+1}=q_{n_{t+1}}\right]\\
       \in&\{\tta_{n_{t+1},n_t}^{a_t}\tta_{n_t,n_{t-1}}^{a_{t-1}}\cdots\tta_{n_1,0}^{a_0}\delta_{q_0}: \tta^{a_i}\in [\![\tta^{a_i}]\!],\\
       & \quad i\in\{0,\cdots,t\},\;\text{and}\;a_t=\kappa(I_{[0,t]}=q_{[0,t]}, a_{[0,t-1]})\}.
    \end{split}
\end{equation*}
By a standard monotone class argument, the convexity for any Borel measurable set $A\in \mathscr{F}$ measured in the set of laws $\pim_I^{q_0,\kappa} $ are guaranteed, which implies the convexity of $S^{q_0, \kappa}$, and hence that of $S^{\mu_0, \kappa}$. 
\end{proof}

The soundness regularity is provided as follows.
\begin{thm}\label{thm: inclusion_control}
Let $\MU$ as in \eqref{E: sysr_control} be a controlled Markov system driven by \eqref{E: sys_abstraction_control}.  Suppose that there exist a state-level abstraction $\alpha$, a measure-level abstraction $\csig_\alpha$, and a BMDP abstraction $\IU$ such that $\MU\preceq _{\csig_\alpha} \IU$. Let $\Psi$ be an $\omega$-regular specification.  Suppose the initial distribution $\nu_0$ of $\MU$ is such that $ \nu_0(\alpha^{-1}(q_0))=1$. Then, given an admissible deterministic control policy $\kappa$, there exists a $\csig_\alpha$-implementation policy $\phi $ of $\kappa$ such that $$ \ppp_X^{\nu_0,\phi}(X^\phi\vDash\Psi)\in \{\pp_I^{q_0,\kappa}(I^\kappa\vDash\Psi)\}_{I^\kappa\in\IU^\kappa},\;\;X^\phi\in\MU^\phi.$$
\end{thm}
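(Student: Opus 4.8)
The plan is to transport the question from the concrete system $\MU$ to its projection onto the partition cells $\{\alpha^{-1}(q_i)\}$, and to exhibit the law of this projected process as that of a single member of the abstraction family $\IU^\kappa$. First I would fix the $\csig_\alpha$-implementation $\phi$ of $\kappa$ whose existence is guaranteed by the relation $\MU\preceq_{\csig_\alpha}\IU$: at each time $t$, reading the cell-path $\alpha(X_{[0,t]}^\phi)$ as the abstract history $I_{[0,t]}$, the policy $\kappa$ returns an action $a_t$, and $\phi_t$ selects the control $u_t$ in the grid $a_t$ furnished by item (2) of Definition \ref{def: abs_control}. Since that $u_t$ depends only on the current cell and on $a_t$, it is constant on each cell and is a function of the cell-history; one then checks $\phi$ is admissible and deterministic, and that the hypothesis $\nu_0(\alpha^{-1}(q_0))=1$ forces $X_0^\phi\in\alpha^{-1}(q_0)$ almost surely, matching the initial law $\delta_{q_0}$ of the abstraction.

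Next I would reduce the event $\{X^\phi\vDash\Psi\}$ to one measurable with respect to the cell-path. By Assumption \ref{ass: part} the labelling is constant on each cell $\alpha^{-1}(q_i)$ and, by item (1) of Definition \ref{def: abs_control}, $L_\MU(x)=L_\IU(q_i)$ for every $x\in\alpha^{-1}(q_i)$; hence the trace $L_\w$ of any concrete path $\w$ depends only on the sequence of cells visited. Since $\Psi$ is an LT property, i.e. a subset of $(2^{\ap})^\omega$, the event $\{X^\phi\vDash\Psi\}$ lies in the sub-$\sigma$-algebra generated by the cell-path, so $\ppp_X^{\nu_0,\phi}(X^\phi\vDash\Psi)$ is a functional of the law of $\{\alpha(X_t^\phi)\}_{t\in\N}$ alone.

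The core step is to identify that cell-path law with some $\pp_I^{q_0,\kappa}$. I would compute the finite-dimensional cell distributions by integrating the concrete kernels $\mathcal{T}^{u_s}$ over each cell and conditioning on the realized cell-path. The projected process is generally \emph{not} Markov, since the within-cell conditional law $\mu_s$ of $X_s$ depends on the past; however, for each such $\mu_s$ supported on $\alpha^{-1}(q_{i_s})$, the effective transition row $\big(\int \mathcal{T}^{u_s}(x,\alpha^{-1}(q_j))\,\mu_s(dx)\big)_j$ is an average of rows each satisfying $\undert^{a_s}_{i_s,j}\le \mathcal{T}^{u_s}(x,\alpha^{-1}(q_j))\le\overt^{a_s}_{i_s,j}$ by Remark \ref{rem: bmdp}, and therefore again lies between $\undert^{a_s}$ and $\overt^{a_s}$. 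Thus every one-step resolution of the projected dynamics is a legitimate row of some matrix in $\transs{\tta^{a_s}}$, so the cell-path law is exactly the law of a member $I^\kappa\in\IU^\kappa$ whose uncertain transitions are resolved in this history-dependent fashion, matching the finite-dimensional distributions displayed in the proof of the preceding Proposition. Equating the two satisfaction probabilities then yields the claim.

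The main obstacle I anticipate is precisely this identification: because the cell-projection destroys the Markov property, one cannot use a single per-time-step matrix, and I must argue that $\IU^\kappa$ is rich enough to contain the history-dependent resolution produced by the within-cell averaging. Here the convexity of each $\transs{\tta^{a_s}}$, which is closed under the averaging above, together with the weak compactness and convexity of the finite-dimensional laws established in \cite{meng2022robustly}, does the work; alternatively, if one prefers not to construct the member explicitly, the same conclusion follows by noting the concrete probability is sandwiched between the extremal values of the compact interval $S^{q_0,\kappa}$ from the preceding Proposition. Measurability of $\{X^\phi\vDash\Psi\}$ and of the cell-path projection is routine given the footnote on $\omega$-regular measurability.
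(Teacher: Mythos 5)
Your proposal follows essentially the same route as the paper's proof: both construct the discrete law inductively by integrating the concrete kernels $\mathcal{T}^{u_t}$ over the partition cells, observe that the resulting effective transition rows stay within the $\undert^{a_t}/\overt^{a_t}$ bounds (your convex-averaging remark is exactly the inequality the paper writes at the $t=2$ step), match finite-dimensional distributions via the Kolmogorov extension theorem, and use that $\{X^\phi\vDash\Psi\}$ is measurable with respect to the cell-path. The only difference is one of emphasis: you explicitly flag the loss of the Markov property under cell-projection and the consequent need for a history-dependent resolution of the interval uncertainty, a point the paper's inductive selection of $\ts^{u_t}$ and $\tta^{a_t}\in\transs{\tta^{a_t}}$ handles only implicitly.
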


\begin{proof}
We denote by $\mu_t$ and $\nu_t$, respectively,  the marginal probability measures on $\mathscr{B}(\Q)$ and $\mathscr{B}(\xx)$ for $t\in\N$. We also use the shorthand notation $\mu_t^a(\cdot) := \mu_t(\;\cdot\;|\af_{t-1}=a)$ and $\nu_t^u(\cdot) := \nu_t(\;\cdot\;|\uf_{t-1}=u)$ to indicate the conditional probabilities. 
We consider $\nu_0=\delta_{x_0}$ a.s. for simplicity. Note that, at $t=1$, by the definition of BMDP abstraction and Remark \ref{rem: bmdp}, there exists a  ${u_0}\in \uu$ such that, 
\begin{equation*}
    \begin{split}
        \undert_{ij}^{a_0} &\leq\nu_1^{u_0}(\alpha^{-1}(q_j)) \\
        & =\int_{\alpha^{-1}(q_j)}\delta_{x_0}\mathcal{T}^{u_0}(x_0,dy) \\
        & \leq \overt_{ij}^{a_0} ,\;\;\forall x_0\in q_0\;\text{and}\;\forall j\in\{1,2,\cdots, N+1\}, 
    \end{split}
\end{equation*}
where $a_0=\kappa_0(q_0)$, and ${u_0}$ is selected accordingly such that the above relation is satisfied. 

We can easily check that $$\mu_1^{a_0}=(\nu_1^{u_0}(\alpha^{-1}(q_1)),\cdots,\nu_1^{u_0}(\alpha^{-1}(q_{N+1})))^T$$ is a proper marginal probability measure of $\IU$ at $t=1$. In particular, $\mu_1^{a_0}(q_j) =\nu_1^{u_0}(\alpha^{-1}(q_j))$ for each $j\in\{1,2,\cdots,N+1\}$.

Similarly, at $t=2$, we have
\begin{equation*}
    \begin{split}
        \undert_{ij}^{a_{1i}}\mu_1^{u_0}(q_i)
        & \leq \nu_2^{u_{1i}}(\alpha^{-1}(q_j))\\       & = \int_{\alpha^{-1}(q_j)}\int_{\alpha^{-1}(q_i)}\nu_1^{u_0}(dx)\mathcal{T}^{u_{1i}}(x,dy)\\
        &\leq \overt_{ij}^{a_{1i}}\mu_1^{u_0}(q_i),\;\forall i,j\in\{1,2,\cdots, N+1\}, 
    \end{split}
\end{equation*}
where $a_{1i}=\kappa_1(q_i)$ for each $i\in\{1,2,\cdots,N+1\}$, and ${u_{1i}}$ is selected accordingly such that the above relation is satisfied. Then, $$\mu_2^{a_{1i}}=(\nu_2^{u_{1i}}(\alpha^{-1}(q_1)),\cdots,\nu_2^{u_{1i}}(\alpha^{-1}(q_{N+1})))^T$$ is again a proper marginal probability measure of $\IU$ at $t=2$ for each $i\in\{1,2,\cdots,N+1\}$.
In addition, there also exists a $\pim^{q_0,\kappa}$ such that its one-dimensional marginals up to $t=2$ admit $\mu_1$ and $\mu_2$, and
satisfies
\begin{equation*}
\begin{split}
        &\pim^{q_0, \kappa}[I_0=q_0, \af_0 = a_0, I_1=q_{i}, \af_1 = a_{1i}, I_2=q_{j}]\\
      =   & \mu_0(q_0)\mu_1^{a_0}(q_i)\mu_2^{a_{1i}}(q_j)\\
      = & \delta_{q_0}(q_0)\int_{\alpha^{-1}(q_i)}\nu_1^{u_0}(dx)\int_{\alpha^{-1}(q_j)}\nu_2^{u_{1i}}(dy)\\
      = & \ppp_X^{q_0,\uf}[X_0=x_0, \uf_0=u_0, X_1\in \alpha^{-1}(q_i),\uf_1=u_{1i},\\
      & \qquad \quad X_2\in\alpha^{-1}(q_j)] 
\end{split}
\end{equation*}
for all $i,j\in\{1,2,\cdots,N+1\}$. We then propagate the process inductively according to the above machinery by 
\begin{itemize}
    \item[1)] selecting $u_t:=u_{tj}$ at each time according to the realization $a_t:=a_{tj}=\kappa_t(q_j)$;
    \item[2)]selecting $\ts^{u_t}$ and $\tta^{a_t}\in\transs{\tta^{a_t}}$ at each time via the connection as the above.
\end{itemize}
We can verify that, by the above selection procedure, there exists  $\pp^{q_0,\kappa}$ such that 
\begin{equation*}
    \begin{split}
        &\pim^{q_0,\kappa}[I_0=q_0, \af_0=a_0, I_1=q_i,\af_1=a_1,\cdots]\\
        =&\ppp_X^{x_0, \uf}[X_0=x_0, \uf_0=u_0, X_1\in \alpha^{-1}(q_i),\uf_2=u_2,\cdots] 
    \end{split}
\end{equation*}
holds for any finite-dimensional distribution. By Kolmogrov extension theorem, there exists a unique probability law $\pim_I^{q_0,\kappa}$ for $(I,\af)$ or $I^\kappa\in\IU^\kappa$ such that it has the same measuring results on any $\mathscr{F}$-measurable sets (recall that $\mathscr{F}=\mathscr{B}(\Q^\infty)$) as the probability law $\ppp_X^{x_0, \uf} $ of the generated process $(X,\uf)$ or $X^\uf$.

The $\csig_\alpha$-implementation $\phi=\{\phi_t\}_{t\in\N}$ exists and is given as $\phi_t(\cdot\;|\;X_{[0,t]},\uf_{[0,t-1]})=\ppp_X^{x_0, \uf}[\uf_t=(\cdot)\;|\;X_{[0,t]},\uf_{[0,t-1]}]$ after averaging out along $\xx^\infty$.
\end{proof}

Based on Theorem \ref{thm: inclusion_control}, we can immediately show whether a control strategy   exists based on the BMDP abstraction such that the controlled process satisfy the probabilistic specification.

\begin{cor}\label{cor: sound_control}
Let $\MU$, its BMDP abstraction $\IU$, an $\omega$-regular formula $\Psi$, and a constant $\rho\in[0,1]$ be given. Suppose there exists a control policy $\kappa$ such that $I^\kappa\vDash\pp^{q_0}_{\bowtie\rho}[\Psi]$ for all $I^\kappa\in\IU^\kappa$, then there exists a policy $\phi$ such that $X^\phi\vDash\ppp^{\nu_0,\phi}_{\bowtie\rho}[\Psi]$ for all $X^\phi\in\MU^\phi$ with $\nu_0(\alpha^{-1}(q_0))=1$. 
\end{cor}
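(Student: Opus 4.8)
The plan is to deduce the corollary directly from Theorem~\ref{thm: inclusion_control} together with the standing hypothesis on the abstraction policy $\kappa$. The role of Theorem~\ref{thm: inclusion_control} is precisely to guarantee that, for the given $\kappa$, there is a $\csig_\alpha$-implementation $\phi$ that transfers the satisfaction probability of the concrete system into the set of satisfaction probabilities realized by the abstraction. The hypothesis then forces every such concrete probability to respect the threshold $\rho$. No new estimates are needed; the content is entirely in matching the two layers of nondeterminism through a common set.

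First I would invoke Theorem~\ref{thm: inclusion_control}. Since the initial distribution $\nu_0$ satisfies $\nu_0(\alpha^{-1}(q_0))=1$ exactly as required there, the theorem produces a deterministic $\csig_\alpha$-implementation $\phi$ of $\kappa$ with the property that
$$\ppp_X^{\nu_0,\phi}(X^\phi \vDash \Psi) \in S^{q_0,\kappa} := \{\pp_I^{q_0,\kappa}(I^\kappa \vDash \Psi)\}_{I^\kappa \in \IU^\kappa}$$
for every $X^\phi \in \MU^\phi$. This containment is the crux: although the concrete family $\MU^\phi$ is a genuinely continuous-state, uncountable collection of processes, each member's satisfaction probability is pinned inside the set $S^{q_0,\kappa}$ generated by the BMDP under $\kappa$.

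Second I would read off the hypothesis. By assumption $I^\kappa \vDash \pp^{q_0}_{\bowtie\rho}[\Psi]$ for all $I^\kappa \in \IU^\kappa$, which, by the definition of probabilistic satisfaction, says exactly that every element of $S^{q_0,\kappa}$ obeys the relation $\bowtie\rho$. Combining this with the containment above, for each $X^\phi \in \MU^\phi$ we obtain $\ppp_X^{\nu_0,\phi}(X^\phi \vDash \Psi) \bowtie \rho$, i.e. $X^\phi \vDash \ppp^{\nu_0,\phi}_{\bowtie\rho}[\Psi]$. Since $X^\phi$ was arbitrary, the conclusion holds for all $X^\phi \in \MU^\phi$. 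I note that the argument is insensitive to which of the four orderings $\bowtie$ denotes, because membership in a set all of whose points satisfy $\bowtie\rho$ transfers the relation in any direction.

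The only point deserving care — and what I would flag as the main (though modest) obstacle — is the alignment of the two universal quantifiers. Theorem~\ref{thm: inclusion_control} yields a single implementation $\phi$ that works simultaneously for the whole family $\MU^\phi$, mapping each concrete law into $S^{q_0,\kappa}$, whereas the hypothesis is universal over all $I^\kappa \in \IU^\kappa$, hence over the entire set $S^{q_0,\kappa}$. It is exactly the meeting of these two universals (over concrete and abstract nondeterminism, respectively) through the common finite-data set $S^{q_0,\kappa}$ that closes the implication, so I would state explicitly that no per-path or adversarial re-selection of $\phi$ is required.
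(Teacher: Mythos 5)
Your argument is correct and is exactly the one the paper intends: the corollary is stated as an immediate consequence of Theorem~\ref{thm: inclusion_control} (the paper gives no separate proof), obtained by combining the containment $\ppp_X^{\nu_0,\phi}(X^\phi\vDash\Psi)\in S^{q_0,\kappa}$ with the hypothesis that every element of $S^{q_0,\kappa}$ satisfies $\bowtie\rho$. Your explicit remark about the single implementation $\phi$ serving the entire family $\MU^\phi$ is a faithful reading of the theorem and adds useful clarity, but does not change the route.
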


\begin{rem}
The purpose of abstraction-based formal methods is in general different from constructing numerical solutions for SDEs. %Numerical SDE schemes use the time discretization and simulate sample paths at discrete time steps, which is also known as time discrete approximations. 
The numerical analysis for SDEs is 
to determine how good the
approximation is and in what sense it is close to the exact solution \cite{kloeden1992stochastic}.

Aside from the analysis based on the time discretization, the stochastic driving forces in discrete-time numerical simulations are given with discrete distributions \textit{in a priori}. For example, a spatial step size should be provided to generate a pseudo random variable from a Gaussian distribution. Consequently, there is a  unique solution in the  discrete canonical space driven by this discrete noise.  The discretized measure of any random variable already provides a deviation from the real measure to begin with. The numerical simulation provides a much smaller set of measurable sample paths, i.e. a natural filtration $\mathscr{F}^{\wb,d}$ subjected to the discrete version of noise $\wb^d$ rather than  $\ff$ (recall Definition \ref{def: clarification}). The missing transitions or measurable sample paths from $\ff$ cannot be recovered given a fixed discretized noise at a time.

On the other hand, from the dual problem point of view, a finite difference approximation for the associated (controlled) Fokker-Plank equation (parabolic equation)
$$\frac{\partial\rho_t}{\partial t}=\LLk^{u,*}\rho_t, $$
where $\LLk^{u,*}$ is the adjoint operator of the infinitesimal generator $\LLk^u$ of \eqref{E: sys_abstraction_control}, provides approximated discrete marginal densities of the probability laws of the solution processes. However, in view of finite-dimensional distribution, this is not sufficient for the evaluation of the probability of sample paths satisfying some linear-time properties over the time horizon. An   approximation should be done for the associated transition semigroups $\{e^{t\LL^{u,*}}\}$ to fulfill such a type of evaluation. 

In comparison with the numerical solutions and the dual approximation of the probability distributions, the stochastic abstractions in this paper do not use the spatially discretized noise as the driving force.  Instead, we directly work on generating a relation based on the state-space discretization such that the transition kernel of the original system is `included' in the discrete family of  transition matrices in the sense of Theorem \ref{thm: inclusion_control}. Even though a refinement of grid size can lead to a convergence for both numerical simulations and stochastic abstractions (see \cite[Proposition 3]{meng2022robustly} for details), they converge from different `directions'. In other words, the family of the discrete probability laws from an abstraction reduces to a singleton whilst the missing transitions in a numerical simulation become empty as the size of the grids converges to $0$. 
\end{rem}

\section{ROBUST COMPLETENESS OF BMDP ABSTRACTIONS}\label{sec: complete}

In this section, we propose the concept of  robustly complete abstractions of discrete-time
nonlinear stochastic systems of the form \eqref{E: sys_abstraction_control} and provide computational procedures
for constructing sound and robustly complete 
abstractions for this class of controlled stochastic systems under mild
conditions.

Note that, in view of the soundness analysis given in Theorem \ref{thm: inclusion_control}, the BMDP abstractions create a formal inclusion of transition probabilities and hence the inclusion of `reachable set' of marginal probability measures. This guarantees that the real satisfaction probability is preserved as in Corollary \ref{cor: sound_control}, however, creates a deviation from the original concrete system. The purpose of completeness analysis in this section is to investigate that, given an arbitrarily small perturbation in a certain sense, whether one can construct a sound BMDP abstraction without providing larger perturbation. To do this, we work on  the space of probability measure metricized by the Wasserstein metric\footnote{This is formally termed as $1^{\text{st}}$-Wasserstein metric. We choose $1^{\text{st}}$-Wasserstein metric due to the convexity and nice property of test functions.} to quantify this extra perturbation.

\subsection{Probability Metrics}

The space of probability measures on a complete, separable, metric (metrisable)  space endowed with the topology of
weak convergence is itself a complete, separable, metric (metrisable) space \cite{billingsley2013convergence}.
 While not easy to compute, the Prohorov metric  can be used to metrize weak topology. We prefer to use Wasserstein metric since it also implies weak convergence and provides more practical meanings in applications. The total variation, on the other hand, implies setwise (conventional) convergence on a continuous base state space $\xx$.

We first recall some basic concepts established in \cite{meng2022robustly} regarding the complete analysis. 

\begin{deff}[Wasserstein distance]\label{def: ws_appendix}
Let $\mu,\nu\in\psp(\xx)$ for $(\xx,|\cdot|)$, the Wasserstein distance is  defined by
$
    \ws{\mu-\nu}=\inf\eee|X-Y|
$, where the infimum is is taken over all joint distributions of the random variables $X$ and $Y$  with marginals $\mu$ and $\nu$ respectively.

We frequently use the following duality form of definition\footnote{$\operatorname{Lip}(h)$ is the Lipschitz constant of $h$ such that $|h(x_2)-h(x_1)|\leq \operatorname{Lip}(h)|x_2-x_1|$.},
\begin{equation*}
\begin{split}
       \ws{\mu-\nu}
       :=&\sup\left\{\left|\int_\xx h(x)d\mu(x)-\int_\xx h(x)d\nu(x)\right|,\right. \\
       &\left.\qquad \; h\in C(\xx),\operatorname{Lip}(h)\leq 1\right\}.  
\end{split}
\end{equation*}
The discrete case, $\ws{\;\cdot\;}^d$, is nothing but to change the integral to summation. Let $\bw=\{\mu\in\psp(\xx): \ws{\mu-\delta_0}< 1\}$. 
Given a set $\G\subseteq\psp(\xx)$, we denote $\|\mu\|_\G=\inf_{\nu\in\G}\ws{\mu-\nu}$  by the distance from $\mu$ to $\G$, and $
    \G+r\bw:=\{\mu:\;\|\mu\|_\G< r\}%=\cup_{\mu\in\G}(\mu+r\bw). 
$\footnote{This is valid by definition.} by the $r$-neighborhood of $\G$.
\end{deff}

Note that $\bw$ is dual to $\oballr$. For any $\mu\in \bw$, the associated random variable $X$ should satisfy $\eee|X|\leq 1$, and vice versa. 

We also frequently use the following inequalities to bound the  Wasserstein distance between two Gaussians, where the R.H.S. of \eqref{E: ws} is the $2^{\text{nd}}$-Wasserstein distance for two Gaussians.

\begin{prop}\label{prop: compare}
Let $\mu\sim\mathcal{N}(m_1,\Sigma_1)$ and $\nu\sim\mathcal{N}(m_2,\Sigma_2)$ be two Gaussian measures on $\R^n$. Then
\begin{equation}\label{E: ws}
\begin{split}
    |m_1-m_2|& \leq \ws{\mu-\nu}\\
    & \leq \left(\|m_1-m_2\|_2^2+\|\Sigma_1^{1/2}-\Sigma_2^{1/2}\|_F^2\right)^{1/2},%\footnote{The R.H.S. of \eqref{E: ws} is the $2^{\text{nd}}$-Wasserstein distance for two Gaussians.}.
\end{split}
\end{equation}
where $\|\cdot\|_F$ is the Frobenius norm. %We use   as the matrix norm. 
\end{prop}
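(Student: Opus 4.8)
The plan is to establish the two inequalities separately, both by passing to couplings of $\mu$ and $\nu$; recall from Definition \ref{def: ws_appendix} that $\ws{\mu-\nu}=\inf\eee|X-Y|$, where the infimum ranges over all couplings $(X,Y)$ with marginals $\mu,\nu$ and $|\cdot|$ is the infinity norm on $\R^n$.

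For the lower bound I would invoke Jensen's inequality. Since $|\cdot|$ is convex, every coupling $(X,Y)$ satisfies $\eee|X-Y|\geq|\eee[X-Y]|=|m_1-m_2|$, because $\eee X=m_1$ and $\eee Y=m_2$. Taking the infimum over couplings yields $|m_1-m_2|\leq\ws{\mu-\nu}$. Equivalently, in the dual form one may test against the coordinate projections $h(x)=x_k$, each of which is $1$-Lipschitz for the infinity norm and gives $\int_\xx h\,d\mu-\int_\xx h\,d\nu=m_{1,k}-m_{2,k}$; maximizing over $k$ recovers $|m_1-m_2|$.

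For the upper bound the key idea is to exhibit one explicit coupling and estimate it, rather than to optimize. Let $Z\sim\mathcal{N}(0,I)$ and set $X=m_1+\Sigma_1^{1/2}Z$ and $Y=m_2+\Sigma_2^{1/2}Z$; then $X\sim\mu$, $Y\sim\nu$, and $(X,Y)$ is a valid coupling. Writing $X-Y=(m_1-m_2)+(\Sigma_1^{1/2}-\Sigma_2^{1/2})Z$ and using $\eee Z=0$, the cross term drops out of the second moment, so that $\eee\|X-Y\|_2^2=\|m_1-m_2\|_2^2+\tr\big((\Sigma_1^{1/2}-\Sigma_2^{1/2})^2\big)=\|m_1-m_2\|_2^2+\|\Sigma_1^{1/2}-\Sigma_2^{1/2}\|_F^2$, where I use $\eee[ZZ^\top]=I$ and the symmetry of the matrix square roots. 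It then remains to pass from this Euclidean second moment to $\ws{\cdot}$: by the pointwise norm inequality $|v|\leq\|v\|_2$ followed by Cauchy--Schwarz (Jensen), $\ws{\mu-\nu}\leq\eee|X-Y|\leq\big(\eee\|X-Y\|_2^2\big)^{1/2}$, which is exactly the claimed right-hand side.

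The only delicate point is keeping the two norms straight: the Wasserstein distance on the left is built from the infinity norm, whereas the right-hand side is phrased in the Euclidean and Frobenius norms. The inequalities $|v|\leq\|v\|_2$ and $\eee W\leq(\eee W^2)^{1/2}$ bridge this gap, and both happen to point in the favorable direction, so nothing is lost beyond what is inherent in comparing a first-order (infinity-norm) distance with a second-order (Euclidean) one. I would also remark that the stated right-hand side is \emph{not} the exact $2$nd-Wasserstein distance between the Gaussians---that would involve $(\Sigma_1^{1/2}\Sigma_2\Sigma_1^{1/2})^{1/2}$ in place of $\Sigma_1^{1/2}\Sigma_2^{1/2}$---but the common-$Z$ coupling sidesteps the Bures formula entirely and delivers the required upper bound directly.
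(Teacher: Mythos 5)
Your proof is correct. The paper itself states Proposition \ref{prop: compare} without proof, treating it as a standard estimate, so there is no in-paper argument to compare against; your write-up supplies a complete and self-contained justification. Both halves are sound: the lower bound via Jensen's inequality applied to any coupling (or, dually, testing against coordinate projections, which are indeed $1$-Lipschitz for the infinity norm), and the upper bound via the common-noise coupling $X=m_1+\Sigma_1^{1/2}Z$, $Y=m_2+\Sigma_2^{1/2}Z$, whose second moment computes exactly to $\|m_1-m_2\|_2^2+\|\Sigma_1^{1/2}-\Sigma_2^{1/2}\|_F^2$, followed by $|v|\leq\|v\|_2$ and Cauchy--Schwarz to descend to the first-order infinity-norm transport cost. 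Your closing remark is also a genuine correction to the surrounding text: the paper's claim that the right-hand side of \eqref{E: ws} \emph{is} the $2^{\text{nd}}$-Wasserstein distance between the Gaussians is only accurate when $\Sigma_1$ and $\Sigma_2$ commute; in general the Bures term $\tr\big(\Sigma_1+\Sigma_2-2(\Sigma_1^{1/2}\Sigma_2\Sigma_1^{1/2})^{1/2}\big)$ is dominated by $\|\Sigma_1^{1/2}-\Sigma_2^{1/2}\|_F^2$, so the stated quantity is an upper bound on that distance, which is all the proposition needs.
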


\begin{deff}[Total variation distance]
Given two probability measures $\mu$ and $\nu$ on $\mathscr{B}(\xx)$, the total variation distance is defined as
\begin{equation}
    \tv{\mu-\nu}=2\sup_{\Gamma\in\mathscr{B}(\xx)}|\mu(\Gamma)-\nu(\Gamma)|.
\end{equation}
    In particular, if $\xx$ is a discrete space, 
    \begin{equation}
         \tv{\mu-\nu}^d=\|\mu-\nu\|_1=\sum_{q\in\xx}|\mu(q)-\nu(q)|.
    \end{equation}
\end{deff}
\begin{rem}
It is equivalent to use the dual representation
\begin{equation}
    \tv{\mu-\nu}=\sup_{\substack{ \|h\|_\infty\leq 1} }\left|\int_\xx h(x)\mu(dx)-\int_\xx h(x)\nu(dx)\right|.
\end{equation}
In this view, total variation distance is not suitable to metrisize weak convergence since it implies a much stronger uniform norm given $\xx$ is continuous. However, working on discrete topology of a finite set, we have the following well known connection.
\begin{equation}\label{E: ws_tv}
    \ws{\mu-\nu}^d = \frac{1}{2}\tv{\mu-\nu}^d.
\end{equation}
This equivalence \cite[Theorem 4]{gibbs2002choosing} on the discrete topology, on the other hand, implies that  abstractions already exist unnecessarily in a functional space with stronger convergence concept. 
\end{rem}

\subsection{Construction of Robustly Complete BMDP Abstractions}

We consider two continuous-state systems with parameters $\ep_2>\ep_1 \ge 0$. The first system, denoted by $\MU_1$, is given by 
\begin{equation}\label{E: control_sys_1}
    X_{t+1}=f(X_t,\uf_t)+b(X_t)\wb_t+\ep_1\xi_t^{(1)},\;\;\xi_t^{(1)}\in\ball,
\end{equation}
and the second system, denoted by $\MU_2$, is driven by 
\begin{equation}\label{E: control_sys_2}
    X_{t+1}=f(X_t,\uf_t)+b(X_t)\wb_t+\ep_2\xi_t^{(2)},\;\;\xi_t^{(2)}\in\ballr.
\end{equation}
We construct a sound and robustly complete BMDP abstraction $\IU$ for $\MU_1$ in a similar way as in \cite{meng2022robustly}, i.e., we build a state-level relation $\alpha$ and a measure-level $\csig_\alpha$ such that 
$$\MU_1\preceq_{\csig_\alpha}\IU,\;\;\IU\preceq_{\csig_\alpha^{-1}} \MU_2.$$

We  define the set of transition probabilities of $\MU_i$, for each fixed $u\in\uu$, from any box $[x]\subseteq\R^n$ as $$\T_i^u([x])=\{\mathcal{T}^u(x,\cdot):\;\mathcal{T}^u\in\transs{\mathcal{T}^u}_i,\;x\in[x]\},\;i=1,2.$$
The following lemma is to straightforward based on \cite[Lemma 3]{meng2022robustly}.

\begin{lem}\label{lem: inclu_control}
Fix any $\ep_1\geq 0$, any box $[x]\subseteq\R^n$, and $u\in\uu$. For all $k>0$, there exists a finitely terminated algorithm to compute an over-approximation of the set of (Gaussian) transition probabilities from $[x]$, 
such that
$$\T_1^u([x])\subseteq \wh{\T_1^u([x])}\subseteq\T_1^u([x])+k\ballr_W, $$
%$$\mathcal{T}([x],\cdot)\subseteq \wh{\mathcal{T}([x],\cdot)}\subseteq\mathcal{T}([x],\cdot)+\kappa\mathbb{B}_W, $$
where $\wh{\T_1^u([x])}$ is the computed over-approximation set of Gaussian measures.
\end{lem}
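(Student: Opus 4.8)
The plan is to parametrize each member of $\T_1^u([x])$ by the (mean, covariance) pair of the associated Gaussian and then to compute outer interval enclosures of the resulting finite-dimensional parameter sets. For a fixed realization $v\in\ball$ of the bounded perturbation, the transition from $x$ is the Gaussian $\mathcal{N}(f(x,u)+\ep_1 v,\,b(x)b(x)^T)$, so $\T_1^u([x])$ is the image of the compact set $[x]\times\ball$ under the map $(x,v)\mapsto(f(x,u)+\ep_1 v,\,b(x)b(x)^T)$. First I would record that, by local Lipschitz continuity of $f(\cdot,u)$ and $b$ together with compactness of $[x]$, the mean-set $M:=\{f(x,u)+\ep_1 v:x\in[x],\,v\in\ball\}$ and the covariance-set $S:=\{b(x)b(x)^T:x\in[x]\}$ are bounded, hence relatively compact, subsets of $\R^n$ and of the cone of positive semidefinite matrices, respectively.

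The algorithm would then produce interval (axis-aligned box) enclosures $\wh{M}\supseteq M$ and $\wh{S}\supseteq S$ via interval extensions of $f$ and $b$ on $[x]$, and set $\wh{\T_1^u([x])}$ to be the family of Gaussians with mean in $\wh{M}$ and covariance in $\wh{S}$. Containment $\T_1^u([x])\subseteq\wh{\T_1^u([x])}$ is then immediate from $M\subseteq\wh{M}$ and $S\subseteq\wh{S}$. For the converse neighborhood inclusion, given any $\mathcal{N}(\hat m,\hat\Sigma)\in\wh{\T_1^u([x])}$ I would pick a genuine element $\mathcal{N}(m,\Sigma)\in\T_1^u([x])$ with $m\in M$ and $\Sigma\in S$ closest to $\hat m$ and $\hat\Sigma$, and invoke Proposition \ref{prop: compare} to bound $$\ws{\mathcal{N}(\hat m,\hat\Sigma)-\mathcal{N}(m,\Sigma)}\leq\left(\|\hat m-m\|_2^2+\|\hat\Sigma^{1/2}-\Sigma^{1/2}\|_F^2\right)^{1/2}.$$ The enclosure gap $\sup_{\hat m\in\wh M}\operatorname{dist}(\hat m,M)$ and its covariance analogue are controlled by the diameter of $[x]$ and the local Lipschitz constants of $f$ and $b$, so refining $[x]$ into finitely many sub-boxes and taking the union of the per-box enclosures drives both gaps below the level at which the displayed bound is $<k$. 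Since each refinement halves the box diameter and only finitely many refinements are needed, the procedure terminates, giving $\wh{\T_1^u([x])}\subseteq\T_1^u([x])+k\bw$.

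The main obstacle I anticipate is the covariance term in Proposition \ref{prop: compare}, which enters through the matrix square root $\Sigma^{1/2}$. The map $x\mapsto(b(x)b(x)^T)^{1/2}$ need not be Lipschitz where the covariance degenerates, so a naive Lipschitz estimate on $\|\hat\Sigma^{1/2}-\Sigma^{1/2}\|_F$ may fail; however, uniform continuity of the matrix square root on the compact set $\overline{S}$ suffices to guarantee that the enclosure gap in the $\Sigma^{1/2}$ coordinate can be made arbitrarily small by further subdivision, which is all the finite-termination argument requires. A secondary point is the unknown law of $\xi_t^{(1)}$: since it is supported in $\ball$, convolving the Gaussian with any such law perturbs it by at most $\ep_1$ in Wasserstein distance, an effect already absorbed into the range $M$ by letting $v$ sweep $\ball$, so restricting attention to Dirac perturbations recovers exactly the Gaussian members claimed in the statement. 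The remainder is the routine interval-arithmetic bookkeeping that mirrors \cite[Lemma 3]{meng2022robustly}.
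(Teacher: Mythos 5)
Your proposal is correct and follows essentially the same route as the paper, which itself only sketches the argument: inclusion (interval) functions for $f$ and $b$ applied to a sufficiently fine mesh of $[x]$, with the union of the resulting mean/covariance enclosures defining $\wh{\T_1^u([x])}$, and Proposition~\ref{prop: compare} converting the parameter-space enclosure gap into a Wasserstein bound below $k$. Your added observations --- that the matrix square root is only uniformly continuous (not Lipschitz) near degenerate covariances, and that $\xi^{(1)}$ acts as a point-mass perturbation so the transitions are genuinely Gaussian --- are sensible refinements of details the paper leaves implicit.
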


Lemma \ref{lem: inclu_control} is to construct an over-approximation $\wh{\T_1^u([x])}$ of the set of Gaussian transition probabilities from the original concrete system $\MU_1$, such that any Gaussian measure within $\wh{\T_1^u([x])}$  will not perturb the mean more than any arbitrarily small $k$. We skip the proof due to the similarity to \cite[Lemma 3]{meng2022robustly} and \cite[Lemma 1]{liu2017robust}. The main step is to find inclusion functions for $f$ and $b$, as well as a mesh of $[x]$ with appropriate size. The over-approximation of the `reachable' mean and covariance can be obtained by union the regions generated by inclusion functions acting on the mesh. 

Note that, recalling  Definition \ref{def: clarification2},  we are actually working on the quantification for the stopped processes. 
The introduce a modification that does not affect the law of the stopped processes, i.e., we use a weighted point mass to represent the measures at the boundary, and the mean value should remain the same.

\begin{deff}\label{def: mod}
For $i=1,2$, we introduce the modified transition probabilities for $\MU_i=(\xx,\uu,  \{\mathcal{T}\}_i, %x_0, 
    \ap,L_\MU)$. For any $u\in\uu$, for all $\mathcal{T}_i^u\in\transs{\mathcal{T}}_i^u$, let \begin{equation}\label{E: mod}
    \ttil_i^u(x,\Gamma)=\left\{\begin{array}{lr} \mathcal{T}_i^u(x,\Gamma),\; \forall\Gamma\subseteq\Ws,\;\forall x\in \Ws,\\
\mathcal{T}_i^u(x,\Ws^c),\;\Gamma=\partial\Ws,\;\forall x\in \Ws,\\
1,\;\Gamma=\partial\Ws,\;x\in\partial\Ws.
\end{array}\right.
\end{equation}
Correspondingly, let $\widetilde{\transs{\mathcal{T}^u}}$ denote the collection. Likewise, we also use $\widetilde{(\;\cdot\;)^u}$ to denote the induced quantities of any other types w.r.t. such a modification. % any after-modification notations.
\end{deff}

We are now ready to show the existence of a robustly complete abstraction given \eqref{E: control_sys_1} and \eqref{E: control_sys_2}. 

\begin{thm}\label{thm: main}
For any $0\leq \ep_1<\ep_2$, we consider $\MU_i=(\xx, \uu,  \{\widetilde{\mathcal{T}}\}_i, \ap,L_\MU)$, $i=1,2$, that are driven by  \eqref{E: control_sys_1} and \eqref{E: control_sys_2}, respectively. 
Then, under Assumption \ref{ass: part}, there exists a rectangular partition $\Q$ (state-level relation $\alpha\subseteq\xx \times \Q$), % that respects the boundary of labelling function,
 a measure-level relation $\csig_\alpha$ 
and %a collection of transition matrices $\{\tta\}$,  such that the 
a finite-state abstraction system $\I=(\Q,\act, \{\tta\},\ap,L_\IU)$ such that% abstracts $\M_1$ and is abstracted by $\M_2$ by the following relation:
\begin{equation}
    \MU_1\preceq _{\csig_\alpha}\IU,\;\;\IU\preceq_{\csig_\alpha^{-1}}  \MU_2.
\end{equation}
\iffalse
\begin{equation}
    \tilde{\trans{\mathcal{T}}}_1\preceq _{\gamma_\alpha}\trans{\tta},\;\;\trans{\tta}\preceq_{\gamma_\alpha^{-1}}\tilde{\trans{\mathcal{T}}}_2.
\end{equation}\fi
\end{thm}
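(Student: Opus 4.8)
The plan is to realize the finite-state system $\IU$ as a BMDP whose transition intervals are read off from a \emph{computed over-approximation} of the transition sets of the less-perturbed system $\MU_1$, and then to spend the strictly positive gap $\ep_2-\ep_1$ as the budget that absorbs the over-approximation error, so that whatever the BMDP can do is in turn realizable by the more-perturbed system $\MU_2$. Concretely, I would first fix a precision $k>0$ with $\ep_1+k\leq\ep_2$, which is possible because $\ep_2>\ep_1$. Invoking Assumption \ref{ass: part}, I take a rectangular partition $\Q=(q_1,\dots,q_N,q_{N+1}:=\Delta)^T$ that refines the labelling; this yields the state-level relation $\alpha$ with $L_\IU(q_i)=L_\MU(x)$ for $x\in\alpha^{-1}(q_i)$, discharging part (1) of Definition \ref{def: abs_control}. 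I also discretize the compact input set $\uu$ into finitely many actions $\act$, each $a$ representing a grid containing a representative $u$.

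For each cell $[x_i]:=\alpha^{-1}(q_i)$ and action $a$, I then apply Lemma \ref{lem: inclu_control} to compute the over-approximation $\wh{\T_1^u([x_i])}$ obeying $\T_1^u([x_i])\subseteq\wh{\T_1^u([x_i])}\subseteq\T_1^u([x_i])+k\ballr_W$, apply the boundary modification of Definition \ref{def: mod} so that out-of-domain mass is routed to $q_{N+1}=\Delta$ and every measure is a genuine stopped-process law, and define the BMDP bounds by
\[
\undert_{ij}^a=\inf_{\mu\in\wh{\T_1^u([x_i])}}\mu(\alpha^{-1}(q_j)),\qquad \overt_{ij}^a=\sup_{\mu\in\wh{\T_1^u([x_i])}}\mu(\alpha^{-1}(q_j)),
\]
the inf/sup being attained by the weak compactness of the law sets recorded in the text together with Prokhorov's Theorem \ref{thm: proh}. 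This fixes $\IU$ and the measure-level relation $\csig_\alpha$, which pairs a kernel with the stochastic vector of its cell-measures. The soundness inclusion $\MU_1\preceq_{\csig_\alpha}\IU$ is then immediate: $\T_1^u([x_i])\subseteq\wh{\T_1^u([x_i])}$ forces $\undert_{ij}^a\leq\mathcal{T}_1^u(x,\alpha^{-1}(q_j))\leq\overt_{ij}^a$ for every $x\in[x_i]$ and every $\mathcal{T}_1^u\in\transs{\mathcal{T}^u}_1$, and since the modified cell-measure vector is stochastic it is a legitimate row of some $\tta^a\in\transs{\tta^a}$, which is exactly what part (2) of Definition \ref{def: abs_control} demands.

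The crux is the completeness inclusion $\IU\preceq_{\csig_\alpha^{-1}}\MU_2$, which reduces to $\wh{\T_1^u([x_i])}\subseteq\T_2^u([x_i])$ at the level of measures, i.e. every BMDP transition must be an actual $\MU_2$ cell-measure. I would argue this along the chain $\wh{\T_1^u}\subseteq\T_1^u+k\ballr_W\subseteq\T_2^u$, the first inclusion being Lemma \ref{lem: inclu_control}. For the second, delicate inclusion, the choice of the Wasserstein metric is what pays off: given $\mu\in\T_1^u([x_i])+k\ballr_W$, I pick $\nu=\mathcal{T}_1^u(x',\cdot)\in\T_1^u([x_i])$ with $\ws{\mu-\nu}\leq k$, and the coupling form of the Wasserstein distance produces a $\zeta$ with $\eee|\zeta|\leq k$ realizing $\mu$ as the law of $f(x',u)+b(x')\wb+\ep_1\xi^{(1)}+\zeta$. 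Setting $\xi^{(2)}:=(\ep_1\xi^{(1)}+\zeta)/\ep_2$ gives $\eee|\xi^{(2)}|\leq(\ep_1+k)/\ep_2\leq 1$, so $\xi^{(2)}\in\ballr$ and $\mu$ is precisely an $\MU_2$-transition from $x'$. Here the duality between $\bw$ and $\oballr$ is essential: because $\MU_2$'s admissible perturbation is an $\ls^1$-ball of random variables rather than a pointwise shift, it can absorb the whole Wasserstein-close deviation within the residual budget $\ep_2-\ep_1\geq k$, instead of only the mean gap controlled by Proposition \ref{prop: compare}.

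The step I expect to be the main obstacle is making this last inclusion fully rigorous while respecting the quantifier pattern of the converse relation. Two points require care: (i) one must select the representative control $u\in a$ uniformly over the cell and verify the relation for \emph{every} admissible $\tta^a$ simultaneously, rather than for a single extreme kernel; and (ii) one must keep $\undert^a,\overt^a$ tight enough that no spurious stochastic matrix in the interval box falls outside the $\MU_2$-realizable cell-measures — this is exactly where the sharper inequalities advertised among the contributions are needed to avoid forcing an unnecessarily fine partition. Handling the covariance over-approximation over a cell (conveniently hidden by the random $\ls^1$ perturbation $\xi^{(2)}$, but otherwise requiring an exact covariance match obtained by refining $[x_i]$) and the open-versus-closed ball bookkeeping that secures $\ep_1+k\leq\ep_2$ are the remaining, essentially routine, checks.
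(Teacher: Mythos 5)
Your overall architecture matches the paper's: over-approximate the transition sets of $\MU_1$ via Lemma \ref{lem: inclu_control}, route out-of-domain mass to $\Delta$ via Definition \ref{def: mod}, and spend the budget $\ep_2-\ep_1$ to absorb the discretization error, with the final step realizing a Wasserstein-close measure as an $\MU_2$-transition by folding the coupling variable into the $\ls^1$ perturbation $\ep_2\xi^{(2)}$. That last absorption step, and the role of the duality between $\ballr$ and the Wasserstein ball, is exactly right. The gap is in how you define the abstraction's transition sets. You take componentwise interval bounds $\undert_{ij}^a=\inf_{\mu}\mu(\alpha^{-1}(q_j))$, $\overt_{ij}^a=\sup_{\mu}\mu(\alpha^{-1}(q_j))$ and let $\transs{\tta^a}$ be the full interval box of stochastic matrices. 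The converse abstraction $\IU\preceq_{\csig_\alpha^{-1}}\MU_2$ quantifies over \emph{every} $\tta^a$ in that box, and the box is strictly larger than the set of achievable rows: a vector attaining the supremum on some cells and the infimum on others need not be close to any single achievable row. Quantitatively, the TV-distance from a box element to the achievable set is controlled only by $\sum_j(\overt_{ij}^a-\undert_{ij}^a)$, and when $\ep_1>0$ this sum does \emph{not} vanish as the partition is refined (the means of the Gaussians in $\T_1^u([x_i])$ already range over a set of diameter of order $\ep_1$, and summing the per-cell oscillations over all $N$ cells yields a quantity of order $\ep_1$ independent of $\eta$). So for $\ep_2-\ep_1$ small relative to $\ep_1$ your construction admits spurious matrices that no refinement can push inside $\widetilde{\T}_2^u(\alpha^{-1}(q_i))$; you name this as obstacle (ii) but supply no mechanism to resolve it, and your reduction of completeness to $\wh{\T_1^u([x_i])}\subseteq\T_2^u([x_i])$ silently discards exactly these matrices.

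The paper's proof resolves this by using a different data structure for the row-sets: it rounds each Gaussian's mean and variance to the grid to obtain finitely many reference measures $\tnu_l^{u,\rff}$, and defines $\transs{\mu^u}$ as the \emph{union of total-variation balls} of explicit radius $\mathbf{tv}(\eta)=2(\sqrt{2n}+2)\eta$ around the corresponding discrete reference rows. Every achievable row lands in some ball (soundness, via Proposition \ref{prop: compare} and the $\eta$-bound \eqref{E: bound} for discretizing a measure), and, crucially, every element of the row-set is within a \emph{known, $\eta$-proportional} distance of an achievable row, so the single budget inequality $2\eta+\tfrac12\mathbf{tv}(\eta)+L\varrho+k\leq\ep_2-\ep_1$ (the $L\varrho$ term accounting for the input discretization, which you also omit from the estimate) closes the converse inclusion \eqref{E: converse_control}. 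If you replace your interval box with this union-of-balls construction, the rest of your argument goes through.
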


\begin{proof}
We construct a finite-state BMDP abstraction in a similar way as in \cite[Theorem 4]{meng2022robustly}. Aside from the additional dependence on the control inputs, we also  provide tighter estimations on sets of probability measures. By Assumption \ref{ass: part}, we use uniform rectangular partition $\Q$ on $\Ws$. We then let the state-level relation be $\alpha=\{(x,q): q=\eta\lfloor\frac{x}{\eta}\rfloor\}\cup\{(\Delta,\Delta)\}$, and $\act=\{a: \varrho\fl{\frac{u}{\varrho}}\}$,  where $\lfloor\cdot\rfloor$ is the floor function. The parameters $\eta$, $\varrho$ are to be chosen later. Denote the number of discrete nodes by $N+1$. 
    
     We construct the measure-level abstraction as follows. We repeat the procedure with updated notations for the control systems.   For any fixed $u=a\in\act$,  for any $\tdt^u\in\tmou_1$ and $q\in \Q$,
 \begin{enumerate}
        \item[1)] %for any , %and for all $x\in\alpha^{-1}(q_i) $,  
        for all $\tnu^u\sim\widetilde{\mathcal{N}}(m,s^2)\in\widetilde{\T}_1^u(\alpha^{-1}(q),\cdot)$, store $\{(m_{l},s_{l})=(\eta\lfloor\frac{m}{\eta}\rfloor,\eta^2\fl{\frac{s^2}{\eta^2}})\}_l$;
        \item[2)] for each $l$, define $\tnu_{l}^{u,\rff}\sim\widetilde{\mathcal{N}}(m_{l},s_{l})$ (implicitly, we need to compute $\nu_{l}^{u,\rff}(\alpha^{-1}(\Delta))$); compute $\tnu_{l}^{u,\rff}(\alpha^{-1}(q_j))$ for each $q_j\in \Q\setminus\Delta$; %compute $\nu_{k}^\rff(\Delta)$; 
        \item[3)] for each  $l$,  define $\mu^{u,\rff}_{l}=[\tnu_{l}^{u,\rff}(\alpha^{-1}(q_1)),\cdots,\tnu_l^{u,\rff}(\alpha^{-1}(q_N)),\tnu_l^{u,\rff}(\alpha^{-1}(\Delta)]$;
        \item[4)] compute
       $ \textbf{ws}:=(\sqrt{2n}+2)\eta 
        $ and $\textbf{tv}:=2\cdot \textbf{ws} $;
        \iffalse
        \begin{equation}
            \operatorname{WS}=\frac{1}{2}\sqrt{\operatorname{tr}(\Sigma_i^{-1}(\Sigma_i+\eta)-I)+(n+1)^2\eta^2\sum\Sigma_i^{-1}-\log\operatorname{det}((\Sigma_i+\eta)\Sigma_i^{-1})}
        \end{equation}\fi
        \item[5)] construct $\transs{\mu^u}=\bigcup_l\{\mu:\tv{\mu-\mu^{u,\text{ref}}_{l}}\leq \textbf{tv}(\eta),\; \mu(\Delta)+\sum_j^N\mu(q_j)=1 \}$;  
        %Let row $i$ of the set-valued matrix $\trans{\tta}_\mathcal{T}$ be $\trans{\mu_i}$ for $i=1,\cdots,n$, and the $n+1$ row be $(0,\cdots, 1)$. 
        \item[6)]let $\csig_\alpha:=\{(\tnu^u,\mu^u),\;\mu^u\in\transs{\mu^u}\}$ be a relation between $\tnu^u\in\widetilde{\T}^u(\alpha^{-1}(q))$ and the generated $\transs{\mu^u}$.
    \end{enumerate}   
 Repeat the above step for all $q$ and then for all $u=a\in\act$, the relation $\csig_\alpha$ is obtained.  We  denote $\G_i^u:=\widetilde{\T}^u_1(\alpha^{-1}(q_i),\cdot)$ and $\widehat{\G}^u_i:=\wh{\widetilde{\T}^u_1(\alpha^{-1}(q_i),\cdot)}$.
 
 \noindent
    \textbf{Step 1}:
 For each $u=a\in\act$, for $i\leq N$, let $\transs{\tta_i^u}= \csig_\alpha(\widehat{\G}_i^u)$ and the transition  collection be $\transs{\tta^u}$. It can be shown  that the finite-state BMDP $\IU$   abstracts $\MU_1$ based on Definition \ref{def: abs_control}:  for each $a$, there exists $u\in\uu$ (where we set it to be $a$), such that for any $\tilde{\nu}^u\in\G_i^u$ and hence in $\widehat{\G}_i^u$, there exists a discrete measures in $\tta_i^u\in\csig_\alpha(\widehat{\G}_i^u)$  such that for all $q_j$ we have  $\tilde{\nu}^u(\alpha^{-1}(q_j))=\tta_{ij}^u$. 
 
 The proof is done by the exact same way as the proof of  \cite[Claim 1, Theorem 4]{meng2022robustly} for each fixed control input. We summarize the methodology as follows:
 \begin{itemize}
     \item[a)] To not miss any possible transition of $\MU_1$ from each $x\in\alpha^{-1}(q_i)$, we work on the over-approximation set $\widehat{\G}_i^u$ of Gaussian measures. It can be easily verified that, within the abstractions, we also have $\Sigma_\alpha(\wh{\G}_i^u)\supseteq \Sigma_\alpha(\G_i^u)$. Now we verify that $\Sigma_\alpha$ given in 6) is indeed a valid relation that creates an abstraction. 
     \item[b)] For any modified Gaussian $\tilde{\nu}^u\in\hat{\G}_i^u$, there  exists a $\tilde{\nu}^{u,\rff}$ such that the distance is bounded:  $\ws{\tilde{\nu}-\tilde{\nu}^{u,\rff}}\leq\ws{\nu-\nu^{u,\rff}}\leq \sqrt{2n}\eta$. This is estimated by Proposition \ref{prop: compare}.
     \item[c)] Reflecting on the space of discrete measures (a row of an abstraction matrix), we have the following inflation
     \begin{equation}
        \begin{split}
         &\ws{\mu-\mu^{u,\rff}}^d\\ 
          \leq& \ws{\mu-\tnu}+\ws{\tnu-\tnu^{u,\rff}}+\ws{\tnu^{u,\rff}-\mu^{u,\rff}}\\
                  \leq& \; \textbf{ws},
        \end{split}
    \end{equation}
 \end{itemize}
 where the first  and third term above is to connect a discretized measure with a continuous measure. Note that for any continuous measure $\mathfrak{v}$ and its discretized version $\mathfrak{m}$, we have
\begin{equation}\label{E: bound}
        \begin{split}
            &\ws{\mathfrak{m}-\mathfrak{v}}\\
            =&\sup_{h\in C(\xx),\operatorname{Lip}(h)\leq 1}\left|\int_\xx h(x)d\mathfrak{m}(x)-\int_\xx h(x)d\mathfrak{v}(x)\right|\\
            \leq &\sup_{h\in C(\xx),\operatorname{Lip}(h)\leq 1}\sum_{j=1}^n\int_{\alpha^{-1}(q_j)}|h(x)-h(q_j)|d\mathfrak{v}(x)\\
            \leq& \eta \sum_{j=1}^n\int_{\alpha^{-1}(q_j)}d\mathfrak{v}(x)\leq \eta.
        \end{split}
    \end{equation}
 By 5) an 6) and  Definition \ref{def: abs_control}, we have stored such $\mu$ centered at the reference measure w.r.t. the total variation distance, and this collection has sufficient amount of transition matrices as a valid abstraction by definition. 
 
 \noindent\textbf{Step 2:} 
 Now we choose of $\eta$ and $\varrho$ such that the constructed BMDP abstraction can be abstracted by $\MU_2$ via the converse relation $\csig_\alpha^{-1}$. Note that $a\in u+\varrho\ball$ for any $u\in\uu$.  We need to choose $\eta$, $\varrho$ and $k$ sufficiently small such that
\begin{equation}
    2\eta+1/2\cdot\textbf{tv}(\eta)+L\rho+k\leq \ep_2-\ep_1,
\end{equation}
where $L$  is the Lipschitz constant of $f$. Then, 
 we have 
\begin{equation}\label{E: converse_control}
    \begin{split}
        \csig_\alpha^{-1}(\csig_\alpha(\widehat{\G}_i^u))&\subseteq \widehat{\G}_i^u+(2\eta+1/2\cdot\textbf{tv}(\eta))\cdot\ballr_W+L\varrho\ball\\
        &\subseteq \G_i^u+(2\eta+1/2\cdot\textbf{tv}(\eta)+L\varrho+k)\cdot\ballr_W 
    \end{split}
\end{equation}
for each $i$. %We denote $D_i^w:=\{\tilde{\mathcal{N}}(m\pm d,s^2):\;\tilde{\mathcal{N}}(m,s^2)\in\tilde{\T}(\alpha^{-1}(q_i)),d\leq w\} $ for each $i$. 
Note that  all the `ref' information is recorded, and, particularly, for any $\mu^u\in\Sigma_\alpha(\G_i^u)$ there exists a $\mu^{u,\rff}$ within a total variation radius $\textbf{tv}(\eta)$. 

The inclusions in \eqref{E: converse_control} are  to conversely find all possible corresponding  measure $\tnu^u$ %\in\gamma_\alpha^{-1}(\gamma_\alpha(\G_i))\subseteq\psp(\xx)$ 
            that matches $\mu^u$ by their probabilities on discrete nodes. 
            All such $\tnu^u$ should satisfy,
            \begin{equation}\label{E: incl}
                \begin{split}
                    &\ws{\tnu^u-\tnu^{u,\rff}}\\
                    \leq  & \ws{\tnu-\mu}+\ws{\mu-\mu^{u,\rff}}^d+\ws{\mu^{u,\rff}-\tnu^{u,\rff}}\\
                  \leq & 2\eta+1/2\cdot\textbf{tv}(\eta),
                \end{split}
            \end{equation}
    where the bounds for the first and third terms are obtained in the same way as \eqref{E: bound}.
The second term is improved compared to \cite{meng2022robustly} based on the connection \eqref{E: ws_tv}. 

By the construction, we can verify that for each $u\in\uu$, there exists an $a\in\act$ (guaranteed by the finite covering relation $a\in u+\varrho\ball$) such that  the choice in \eqref{E: converse_control} makes $\csig_\alpha^{-1}(\csig_\alpha(\widehat{\G}_i^u))\subseteq\widetilde{\T}_2^u(\alpha^{-1}(q_i))$, which completes the proof.
\end{proof}

\begin{rem}
As noted in \cite{meng2022robustly}, the key point of the construction in Theorem \ref{thm: main} is to record the `ref' points and corresponding radii, which form finite coverings of the compact space of measures. We use `finite-state' instead of `finite' abstraction because we do not further discretize the dual space of the solution processes, which is the space of probability measures. 
%As remarked in \cite{meng2022robustly}, the key point of the construction in Theorem \ref{thm: main} is to record the `ref' points  and the corresponding radius. These are nothing but finite coverings of the  compact space of measures. We did not further discretize the space of probability measures, which is the dual space of the solution processes, and this explains the reason why we use `finite-state' rather than `finite' abstraction. 
\end{rem} 

\begin{rem}
As shown in Step 1.b), we estimate the Wasserstein distance with the reference measure by the second moment difference of the associated random variables. In this view, we can also replace the additional uncertainty  $\xi^{(1)}$, which is a sequence of   point-mass perturbations, by a sequence of $\ls^1$ independent noise with known, bounded second moment. In this case, we can still eventually obtain a robust complete abstraction by a similar methodology. 
\end{rem}

Theorem \ref{thm: main} shows that, given any $0\leq \ep_1<\ep_2$, there exists a sufficiently and  necessarily refined uniform discretization of $\xx$, as well as a measure-level relation such that a robustly complete abstraction $\IU$ can be constructed. 
We can algorithmically
synthesize a control strategy for $\MU_1$ by generating 
$\IU$ and then solving a discrete synthesis problem for $\IU$
with some probabilistic specification. In view of Corollary \ref{cor: sound_control}, if a control strategy $\kappa$ exists to fulfill the probabilistic specification for $\IU^\kappa$, then there exists a policy $\phi$ to guarantee the satisfaction of $\MU_1^\phi$. On the other hand, if there is no policies to realize a specification for $\IU$, then the system $\MU_2$ is also not realizable w.r.t. the same specification. The latter is implied by the following corollary. 

\begin{cor}\label{cor: sound_control_converse}
Given a  specification formula $\Psi$, let $S_2^{\nu_0,\phi}=\{\ppp_X^{\nu_0,\phi}(X\vDash\Psi)\}_{X^\phi\in\MU_2}$ be the set of the satisfaction probability of $\Psi$ under a control policy $\phi$ for the system $\MU_2$. Then, for each control policy $\phi$ of $\MU_2$, there exists a policy $\kappa$ for $\IU$ such that $ S_\IU^{q_0,\kappa}\subseteq S_2^{\nu_0,\phi}$ for any initial conditions satisfying $\nu_0(\alpha^{-1}(q_0))=1$, where $S_\IU^{q_0,\kappa}=\{\pim_I^{q_0,\kappa}(I\vDash\Psi)\}_{I^\kappa\in\IU}$. Both $S_\IU^{q_0,\kappa}$ and $S_2^{\nu_0,\phi}$ are compact. 
\end{cor}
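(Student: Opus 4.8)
The plan is to treat this as the exact dual of Corollary~\ref{cor: sound_control}: whereas that result exploited the soundness direction $\MU_1\preceq_{\csig_\alpha}\IU$, here I would apply the very same coupling machinery of Theorem~\ref{thm: inclusion_control} to the converse abstraction relation $\IU\preceq_{\csig_\alpha^{-1}}\MU_2$ that Theorem~\ref{thm: main} furnishes. In this converse pairing the roles are swapped: $\IU$ now plays the part of the ``concrete'' system and $\MU_2$ that of its ``abstraction.'' Consequently, given a control policy $\phi$ for $\MU_2$, I would construct $\kappa$ for $\IU$ as the $\csig_\alpha^{-1}$-implementation of $\phi$, which is precisely the direction of policy construction asserted in the statement (abstraction policy given, concrete policy built).

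The heart of the argument is a realization-wise coupling. For each Markov chain $I^\kappa\in\IU^\kappa$ (a fixed choice of matrices $\tta^{a}\in\transs{\tta^{a}}$ at every step), item~(2) of Definition~\ref{def: abs_control} read for the converse relation---concretely the inclusion $\csig_\alpha^{-1}(\csig_\alpha(\widehat{\G}_i^u))\subseteq\widetilde{\T}_2^u(\alpha^{-1}(q_i))$ obtained at the end of the proof of Theorem~\ref{thm: main}---guarantees a kernel $\mathcal{T}^u\in\transs{\mathcal{T}^u}_2$ whose grid-to-grid probabilities $\mathcal{T}^u(x,\alpha^{-1}(q_j))$ match the entries $\tta_{ij}^{a}$. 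Propagating this match inductively and invoking the Kolmogorov extension theorem exactly as in Theorem~\ref{thm: inclusion_control}, I would produce a process $X^\phi\in\MU_2^\phi$ whose grid-projected law agrees with $\pim_I^{q_0,\kappa}$ on every $\mathscr{F}$-measurable set. Since under Assumption~\ref{ass: part} the labelling $L$ is constant on each cell $\alpha^{-1}(q_j)$, the trace---and hence the satisfaction event $\{\vDash\Psi\}$---depends only on the grid-projected path; therefore $\pim_I^{q_0,\kappa}(I\vDash\Psi)=\ppp_X^{\nu_0,\phi}(X^\phi\vDash\Psi)\in S_2^{\nu_0,\phi}$. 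As $I^\kappa$ ranges over $\IU^\kappa$ this yields the desired inclusion $S_\IU^{q_0,\kappa}\subseteq S_2^{\nu_0,\phi}$, the initialization $\nu_0(\alpha^{-1}(q_0))=1$ ensuring the coupling starts from matched distributions.

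For compactness, $S_\IU^{q_0,\kappa}$ is a compact interval by the earlier Proposition on BMDP-derived satisfaction sets (the Proposition preceding Theorem~\ref{thm: inclusion_control}). For $S_2^{\nu_0,\phi}$ I would argue that the family of laws $\{\ppp_X^{\nu_0,\phi}\}_{X^\phi\in\MU_2^\phi}$ is tight---hence relatively weakly compact by Prokhorov's theorem (Theorem~\ref{thm: proh})---and convex, and that the satisfaction functional $\ppp\mapsto\ppp(\,\cdot\vDash\Psi)$ is weakly continuous on this family (using that $\omega$-regular events are measurable with negligible boundary, via the alternative criterion in Definition~\ref{def: weak_conv}), so that its image is compact; this follows the same reasoning as that Proposition and the weak-topology results of \cite{meng2022robustly}.

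I expect the main obstacle to be the construction of the converse implementation $\kappa$ itself: unlike the forward case, the discrete policy carries strictly coarser history information than the continuous $\phi$, so $\kappa$ must be obtained by disintegrating $\phi$ over the product of cells $\alpha^{-1}(I_0)\times\cdots\times\alpha^{-1}(I_t)$ and selecting the representative action through the finite covering $a\in u+\varrho\ball$ built in Theorem~\ref{thm: main}. Verifying that this disintegration yields an admissible deterministic policy consistent with the coupling---and that the ``for any $\tta^a$, there exists $\mathcal{T}^u$'' quantifier order of the converse relation is genuinely strong enough to match \emph{every} BMDP realization rather than merely the reference measures---is where the care is needed; the remaining weak-compactness and continuity claims are routine adaptations of \cite{meng2022robustly}.
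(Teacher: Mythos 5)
Your proposal is correct and follows essentially the same route as the paper, whose proof is only a two-line sketch stating that the inclusion and compactness are obtained "in a similar way as Theorem~\ref{thm: inclusion_control} by the inductive construction of probability laws" --- i.e., exactly the role-swapped coupling under the converse relation $\csig_\alpha^{-1}$ that you describe. Your elaboration (the $\csig_\alpha^{-1}$-implementation, the Kolmogorov-extension step, and the tightness/convexity argument for compactness) supplies detail the paper omits but does not diverge from its approach.
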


\begin{proof}
The  inclusion and compactness (for each policy) is done in a similar way as Theorem \ref{thm: inclusion_control} by the inductive construction of probability laws. 
\end{proof}

\section{A DISCUSSION ON  STOCHASTIC CONTROL SYSTEMS WITH NOISY OBSERVATION}\label{sec: noisy}
In this section, we discuss the case when the observations of the sample paths are corrupted by noise. Since there is no direct access to the exact sample path information, we aim to obtain optimal estimates of the sample path signal based on noisy observations, which is known as the optimal filter. Apart from the nonlinear filtering, the philosophy of constructing sound and robustly complete abstractions for such systems maintain the same. We hence do not reiterate the procedure in this section but rather deliver a discussion on the mathematical complexity of the  potential abstractions. Before we proceed, we briefly introduce the theory of nonlinear filtering. 

\subsection{Nonlinear Filtering for Discrete-Time Systems}
Consider the discrete-time signal and observation of the following form
\begin{subequations}
\begin{align}
&X_{t+1}=f(X_t, \uf_t)+b(X_t)\wb_t,\label{E: prior}\\
&Y_t=h(X_t)+\beta_t, \label{E: observation}
\end{align}
\end{subequations}
where $Y$ is a $\mathcal{Y}$-valued observation via a continuous Borel measurable function $h$ and i.i.d. Gaussian process $\beta:=\{\beta_t\}_{t\in\N}$ with proper dimensions. We also set $\wb$ and $\beta$ to be mutually independent. 

Similar to \eqref{E: history_notation}, for any fixed $t>0$, we define the short hand notation for the history of observation
\begin{equation}
    Y_{[0,t]}:=\{Y_s\}_{s\in[0,t]}
\end{equation}
%as well as $\mathscr{Y}:=Y_{[0,\infty)}$. 
Unlike the system without corrupted observations, it is natural
to suppose that the selection of a control at time $t$ is based on %$\sigma\{\xx_{0},\xx_1,\cdots,\xx_T\}$ (
$Y_{[0,t]}$ and %$\sigma\{\uf_{0},\uf_1,\cdots,\uf_{T-1}\}$ (
$\uf_{[0,t-1]}$. 
An admissible control policy $\kappa=\{\kappa_t\}_{t\in\N}$ in this case is such that,  for each fixed $t>0$, we have, for any $\ck\in\mathscr{B}(\uu)$,
\begin{equation}\label{E: policy_partial}
    \kappa_t(\ck\;|\;Y_{[0,t]}; \uf_{[0,t-1]})=\ppp[\uf_t\in\ck\;|\;Y_{[0,t]}; \uf_{[0,t-1]}].
\end{equation}
A deterministic admissible policy $\kappa$ is such that $\uf_t=\kappa_t(Y_{[0,t]}; \uf_{[0,t-1]})$. 

Let $H(Y_t\in A\;|\;X_t=x_t)$, $A\in\mathscr{B}(\mathcal{Y})$, be the observation channel, which is the transition kernel generated by \eqref{E: observation}. Given any initial distribution $\mu_0$ of $X$,  the probability law $\ppp^{\mu_0,\kappa}$ of $(X,Y,\uf):=\{X_t,Y_t,\uf_t\}_{t\in\N}$ can be uniquely determined based on the the transition kernel, observation channel, and the control policy. 

Given a policy $\kappa$ (we set it to be deterministic without loss of generality), the estimation of $X_t$ given $Y_{[0,t]}$ that 
minimizes the mean square error loss is given as $$\Pi_t(\Gamma):=\ppp^{\mu_0,\kappa}[X_t\in\Gamma\;|\;Y_{[0,t]},\uf_{[0,t-1]}], \;\;\Gamma\in\mathscr{B}(\xx).$$ We call this random measure $\Pi_t\in\psp(\xx)$ for each $t$ the optimal filter. Using Bayes rule, we have
\begin{equation}\label{E: brief}
    \begin{split}
        &\Pi_t(\Gamma)\\
        &\quad=\ppp^{\mu_0,\kappa}[X_t\in\Gamma\;|\;Y_{[0,t]},\uf_{[0,t-1]}]\\
        &\quad=\frac{\int_\xx H(Y_{t}|X_{t}=x_{t})\tta_{t-1}^{\uf_{t-1}}(x_{t-1}, \Gamma)\cdot\Pi_{t-1}(dx_{t-1})}{\int_\xx\int_\xx H(Y_{t}|X_{t}=x_{t})\tta_{t-1}^{\uf_{t-1}}(x_{t-1}, dx_t)\cdot\Pi_{t-1}(dx_{t-1})}\\
       &\quad=:F(\Pi_{t-1},Y_{t-1},\uf_{t-1})(\Gamma),
    \end{split}
\end{equation}
where $\uf_{t-1}$ is determined by $\kappa_{t-1}(Y_{[0,t-1]},\uf_{[0,t-2]})$. 
\iffalse
We can simply regard the denominator of $F(\Pi_{t-1},Y_{t-1},\uf_{t-1})$ as a normalizer
\begin{equation}
    \begin{split}
        &\mathfrak{n}(Y_t)\\
        :=&\int_\xx\int_\xx H(Y_{t}|X_{t}=x_{t})\tta_{t-1}^{\uf_{t-1}}(x_{t-1}, dx_t)\cdot\Pi_{t-1}(dx_{t-1})
    \end{split}
\end{equation}
since the dependence on $\xx$ is averaged out. \fi 
It can also be shown that the process $(\Pi,\uf):=\{\Pi_t,\uf_t\}_{t\in\N}$ is a controlled Markov process \cite{saldi2017finite} with transition probability 
\begin{equation}
    \begin{split}
     & \ppp[\Pi_{t+1}\in D\;|\;\Pi_t=\pi_t,\uf_t=u_t]\\
     = & \int_\mathcal{Y}\mathds{1}_{\{F(\pi_t,y_t,u_t)\in D\}}\cdot \mathfrak{n}(dy_t),\;\;D\in\mathscr{B}(\psp(\xx)).   
    \end{split}
\end{equation}
We also use $\Pi^\uf$ to emphasize the marginal behavior of the process $(\Pi,\uf)$. Given the observations and the adaptively generated control signal,  the optimal estimation of the conditional probability of satisfying any $\omega$-regular formula $\Psi$ is given by 
\begin{equation}\label{E: random_measure}
    \ppp_\Pi^{\mu_0,\uf}[X\vDash\Psi]:=\ppp^{\mu_0,\uf}[X\vDash\Psi\;|\;Y]=\int_{\xx^\infty} \mathds{1}_{\{X\vDash\Psi\}}\;\Pi^\uf(dx). 
\end{equation}

Note that it is difficult to obtain the full knowledge of $Y$, our goal is to generate control policies such that %for any realization $y$ of  $Y$, 
the optimal estimation $\ppp^{\mu_0,\uf}[X\vDash\Psi\;|\;Y]$ possesses certain confidence of satisfying the probabilistic requirement given any realization of observation. The above derivation converts the problem into a fully observed controlled Markov process $(\Pi,\uf)$ via an enlargement of the state space, where  control policies and even optimal control policies can be synthesized accordingly for the (hypothetically) fully observed $\Pi$ \cite{saldi2017finite}.  The policy fulfilling the goal mentioned above is thereby decidable. 

The construction  of the optimal  filter process (or the function $F$ in \eqref{E: brief}) can be decomposed into a two-step recursion based on the transition relation in \eqref{E: brief}. \\
\noindent\textbf{Prediction (Prior)}: At time $t$, $\Pi_{t-1}(dx)$ is feed into the r.h.s. of the prior knowledge of the dynamics for $X$, i.e., \eqref{E: prior}. The prediction of $X_t$ based on $Y_{[0,t-1]}$ as well as the $u$ determined at $t$ is such that 
\begin{equation}\label{E: prediction}
    \hat{\Pi}_t(dx)=\int_\xx\tta_t^u(\tx,dx)\Pi_{t-1}(d\tx).
\end{equation}
\noindent\textbf{Filtering (Posterior)}: 
This step is to assimilate the observation at the instant $t$, which is given as 
\begin{equation}\label{E: filtering}
    \Pi_t(dx)=\mathfrak{n}(Y_t) H(Y_t|X_t=x)\hat{\Pi}_t(dx),
\end{equation}
where $\mathfrak{n}(Y_t)=\int_\xx H(Y_t|X_t=x)\hat{\Pi}_t(dx)$ is the normalizer. 

For numerical approximation, we simulate and propagate the optimal filter process using matrix approximations of each step's  transition kernel, whereas for formal abstractions, we need to find the `inclusion' of the transitions for each step as usual. 

\subsection{A Brief Discussion on Stochastic Abstractions for Control Systems with Noisy Observations}
Motivated by generating optimal control policies using the  knowledge of  filter process $(\Pi, \uf)$, the stochastic abstractions for partially observed processes can be reduced to obtain a sound and robustly complete abstraction for the process $(\Pi, \uf)$. To convey the idea, we simply consider the following two systems with noisy observations
\begin{subequations}\label{E: filter_1}
\begin{align}
&X_{t+1}=f(X_t, \uf_t)+b(X_t)\wb_t+\ep_1\xi_t^{(1)},\\
&Y_t=X_t+\beta_t+\varsigma_1\zeta_t^{(1)}, 
\end{align}
\end{subequations}
and
\begin{subequations}\label{E: filter_2}
\begin{align}
&X_{t+1}=f(X_t, \uf_t)+b(X_t)\wb_t+\ep_2\xi_t^{(2)},\\
&Y_t=X_t+\beta_t+\varsigma_2\zeta_t^{(2)}, 
\end{align}
\end{subequations}
where $\zeta_t^{(i)}\in\ballr$ are i.i.d. for each $t$ and each $i\in\{1,2\}$, the intensities satisfy $0\leq \varsigma_1<\varsigma_2$. The rest of the notations are as previously mentioned. 

We convert the filter processes that are generated by \eqref{E: filter_1} and \eqref{E: filter_2} into the expression of controlled Markov systems
\begin{equation}
    \mathbb{FU}_i=(\xx,\mathcal{Y}, \uu, \{T\}_i, \transs{H}_i, \ap, L_\mathbb{FU}),\;\;i=1,2,
\end{equation}
where the additional $\mathcal{Y}$ and its collection of observation channel $\transs{H}_i$ are  needed in the filtering step for generating the controlled filter process $\Pi^\uf$. The other notions are the same as previously mentioned.

To find an abstraction for $\mathbb{FU}_1$, we need a state-level relation or discretization $\alpha$ as usual. Then, we need both $\{T\}_1$ and $\transs{H}_1$ to be abstracted via some measure-level relation, so that the transition probability of $\Pi$ is abstracted by a set of discrete transition probabilities given the same set of discrete observations in the sense of (2) of Definition \ref{def: abs_control}.

Now we denote the BMDP abstraction for $\mathbb{FU}_1$ as 
\begin{equation}
    \IU=(\Q, \mathcal{Y}_\Q, \act, \{\tta\}, \transs{H_\Q}, \ap, L_\IU),
\end{equation}
where $\mathcal{Y}_\Q$ is the discretized observation states that are obtained by the state-level relation $\alpha$, and $\transs{H_\Q}$ is the collection of the discrete observation channels that are obtained based on some measure-level relation $\csig_\alpha$. The intuition   of $\IU$ is that we need `more' transitions in the abstraction for the prior knowledge of the dynamics that are related via the measurablility of labelled nodes, as well as `more' transitions for the filtering step to obtain enough observations for decision making. 

\begin{rem}
Note that the collection $\{\tta\}$ for each $u$ can be obtained in the same way as the case without noisy observation. To obtain $\transs{H_\Q}$, we notice that
$$H(dy\;|\; x)=\frac{1}{\sqrt{2\pi}}\exp\left[\frac{(y-x)^2}{2}\right]dy. $$
The over/under-approximation for any $x\in\alpha^{-1}(q_i)$ to the observation $\alpha^{-1}(q_j)$ can be evaluated accordingly. 
\iffalse
\begin{equation}
    \begin{split}
        Y_{t}=&X_{t}+\beta_{t}+\varsigma_1\zeta_t^{(1)}\\
        =&f(X_{t-1}, \uf_{t-1})+b(X_t)w_t+\ep_1\xi_t^{(1)}+\beta_{t}+\varsigma_1\zeta_t^{(1)}\\
         =&f(Y_{t-1}, \uf_{t-1})+b(Y_t)w_t+\beta_{t}+\varsigma_1\zeta_t^{(1)}\\
        &+f(X_{t-1}, \uf_{t-1})+b(X_t)w_t-f(Y_{t-1}, \uf_{t-1})-b(Y_t)w_t\\
        \in & f(Y_{t-1}, \uf_{t-1})+b(Y_t)w_t+\beta_t+L\varsigma_1\ballr,
    \end{split}
\end{equation}

where the $L\varsigma$ in the last term is created by the Lipschtiz continuity of $f$ and $b$, as well as the $\LL^1$-bounded noise $\varsigma_1\zeta^{(1)}$. We can construct the collection of discrete transitions for $Y$ in the same way as for $X$ given the above inclusion. 
\fi
\end{rem}

The soundness of $\IU$ for the controlled filter process system $\mathbb{FU}_1$ is in the following sense: given any initial distribution, for each $\kappa$ (based on the discrete observation $Y_\Qs$) for $\IU$, there exists a control policy $\phi$ such that for each $\Pi^\phi\in\mathbb{FU}_1^\phi$, \begin{itemize}
    \item there exists a $\Pi^{d,\kappa}\in\mathbb{IA}^{\kappa}$ whose observation process $Y_\Q$ has the same probability  with $Y$ of $\Pi^\phi$ on each discrete node $q\in\Q$, and $\Pi^{d,\kappa}$ has the same evaluation on all the discrete measurable sets $A\in\mathscr{F}$ with $\Pi^\phi$;
    \item the discrete probability law $\mathbb{P}^{d,\kappa}$ for $\Pi^{d,\kappa}\in\mathbb{IA}^{\kappa}$ forms a convex and weakly compact set;
    \item the optimal estimation satisfies, for a given $p\in[0,1]$, 
    \begin{equation}
    \begin{split}
     &\int_{\Pi^\phi\in\mathscr{B}(\psp(\xx))}\mathds{1}_{\left\{\ppp_\Pi^{\nu_0,\phi}[X^\phi\vDash\Psi]\bowtie p\right\}}\mathbb{P}^{\phi}(d\Pi^\phi)\\
     \in & \left\{\int_{\Pi^{d,\kappa}\in\mathscr{B}(\psp(\Q))}\mathds{1}_{\left\{\pim_\Pi^{\mu_0,\kappa}[X^\phi\vDash\Psi]\bowtie p\right\}}\mathbb{P}^{d,\kappa}(d\Pi^{d,\kappa})\right\}.%_{\Pi^{d,\kappa}\in\IU^\kappa}
    \end{split}
    \end{equation}
    \iffalse
    \begin{equation}
    \begin{split}
     &\int_{\Pi^\phi\in\mathscr{B}(\psp(\xx))}\left[\ppp_\Pi^{\nu_0,\phi}[X^\phi\vDash\Psi]\bowtie p\right]\mathbb{P}^{\phi}(d\Pi^\phi)\\
     \in & \left\{\int_{\Pi^{d,\kappa}\in\mathscr{B}(\psp(\Q))}\left[\pim_\Pi^{\mu_0,\kappa}[X^\phi\vDash\Psi]\bowtie p\right]\mathbb{P}^{d,\kappa}(d\Pi^{d,\kappa})\right\}_{\Pi^{d,\kappa}\in\IU^\kappa}
    \end{split}
    \end{equation}
    \fi
\end{itemize}
\iffalse
the random variable (which is generated by the prediction step in \eqref{E: prediction}) $\int_{\xx^\infty}\mathds{1}_{\{X^\pi\vDash\Psi\}}\hat{\Pi}^\pi(dx)$ belongs to a random compact interval $$\left\{\int_{\Q^\infty}\mathds{1}_{\{I^\kappa\vDash\Psi\}}\hat{\Pi}_I^\kappa(dx)\right\}_{\Pi_I^\kappa\in\IU^\kappa}=:[\check{p}^\kappa_{\text{sat}}(\w), \;\hat{p}^\kappa_{\text{sat}}(\w)],$$
where $\hat{\Pi}^\kappa_I$ is the prediction step of $\Pi_I^\kappa\in\IU^\kappa$, and the $\w$ in $[\check{p}^\kappa_{\text{sat}}(\w), \;\hat{p}^\kappa_{\text{sat}}(\w)]$ denotes the randomness. Now, after the filtering step, due to the relation between $\transs{H}_1$ and $\transs{H_\Q}$, the probability of each element  $p^\kappa(\w)\in[\check{p}^\kappa_{\text{sat}}(\w), \;\hat{p}^\kappa_{\text{sat}}(\w)]$  given an observation belongs to a compact interval. 
\fi
A proper task is to find an control policy such that the optimal estimation of the probabilistic specification of $X\vDash\Psi$ has a confidence at least $q\in[0,1]$, i.e., $\mathbb{P}^\phi\left(\ppp^{\mu_0,\phi}[X\vDash\Psi\;|\;Y]\bowtie p\right)\geq q$. Then we can search control policies $\kappa$ in $\IU$ for all the filter process $\Pi^{d}$, such that strategy can make the lower bound of  $$\left\{\int_{\Pi^{d,\kappa}\in\mathscr{B}(\psp(\Q))}\mathds{1}_{\left\{\pim_\Pi^{\mu_0,\kappa}[X^\phi\vDash\Psi]\bowtie p\right\}}\mathbb{P}^{d,\kappa}(d\Pi^{d,\kappa})\right\}_{\Pi^{d,\kappa}\in\IU^\kappa} $$
greater than or equal to $q$.

The robust completeness can be verified in a similar way as Section \ref{sec: complete}, except now we need to decompose the procedure to guarantee the robust completeness for both prediction and filtering steps. The discretization need to rely on the value of $\ep_2-\ep_1$ and $\varsigma_2-\varsigma_1$.

Recall Section \ref{sec: bmdp}, where we have compared the abstraction with the numerical simulation of the probability measure using  finite-difference schemes for Fokker-Planck equations. The counterpart of Fokker-Planck equations for evaluating the probability law of the optimal filter in  systems with  noisy observations is the famous Zakai's stochastic partial differential equation\footnote{We omit the content here and kindly refer readers to \cite{budhiraja2007survey} for details.}. The approximation of such a solution   already suffers from the curse of dimensionality. Using formal abstractions to enlarge the partially observed processes into the filter processes with full observations, based on which  control policies can be determined and  utilized back to the partially observed cases, seems tedious and impractical. Besides the theoretical formal guarantee of a confidence of a satisfaction probability (i.e., a probabilistic requirement of the probabilistic specification), the abstraction  essentially solves the continuous probability law of a continuous conditional expectation (or a random measure) upon some process with discrete labels using discrete inclusions. We hence do not recommend readers to complicate the problem.

\section{CONCLUSION}\label{sec: conclusion}

In this paper, we investigated the mathematical properties of  formal abstractions for discrete-time  controlled nonlinear stochastic systems. We discussed the motivation  of constructing sound and complete formal stochastic abstractions  and the philosophy in comparison to numerical approximations in Section \ref{sec: bmdp}. A brief discussion on the extension of stochastic abstractions for controlled stochastic systems with noisy observation was provided in Section \ref{sec: noisy}. The construction of such abstractions can be analogous to solving a discretized version of Zakai's equation via formal inclusions, which suffers from a curse of  excessive dimensionality. 

Our work provides an appropriate mathematical language to discuss the soundness and approximate completeness of abstractions for stochastic systems using BMDP. We show that abstractions with extra uncertainties are not straightforward extensions of their non-stochastic counterparts \cite{liu2017robust, liu2021closing}, and view this as the most significant contribution of our work.

For future work, it would be interesting to design algorithms to construct robustly complete BMDP  abstractions for more general robust stochastic systems with $\ls^1$ perturbations based on the weak topology. The size of state discretization can be refined given more specific assumptions on  system dynamics and linear-time objectives. It is also of a theoretical interest to construct robustly complete abstractions for continuous-time stochastic system and demonstrate the controllability given mild conditions.  Even though we aimed to provide a theoretical foundation of BMDP abstractions for  continuous-state stochastic systems, we hope the results can shed some light on designing more powerful robust control synthesis algorithms.

\bibliographystyle{ieeetr}  % ieeetr     % Include this if you use bibtex 
\bibliography{ref}

\vspace*{5pt}

\end{document}